\theoremstyle{plain} \theorembodyfont{\itshape}
\newtheorem{theorem}{Theorem}
\newtheorem{lemma}{Lemma}
\theoremstyle{plain} \theorembodyfont{\itshape}
\newtheorem{corollary}{Corollary}
\newtheorem{proposition}{Proposition}
\newtheorem{example}{Example}
\newtheorem{assumption}{Assumption}
\newtheorem{proof}{Proof}
\newtheorem{definition}{Definition}
\newtheorem{remark}{Remark}
\newcommand{\new}{\color{black}}
\newtheorem{notation}{Notation}
\begin{document}
\title{Capacity Scaling in MIMO Systems with General Unitarily Invariant Random Matrices}
\author{Burak~\c{C}akmak, Ralf~R.~M\"{u}ller,~\IEEEmembership{Senior~Member,~IEEE,} Bernard~H.~Fleury,~\IEEEmembership{Senior~Member,~IEEE}
\thanks{
Burak~\c{C}akmak and Bernard~H.~Fleury were supported by the research project VIRTUOSO funded by Intel Mobile Communications, Keysight, Telenor, Aalborg University, and the Danish National Advanced Technology Foundation. Ralf~R.~M\"{u}ller was supported by the Alexander von Humboldt Foundation.}
\thanks{Burak~\c{C}akmak is  with the Department of Computer Science, Technical University of Berlin, 10587 Berlin, Germany (e-mail: burak.cakmak@tu-berlin.de).} 
\thanks{Ralf~R.~M\"{u}ller is with the Institute for Digital Communications, Friedrich-Alexander Universit{\"a}t Erlangen-N\"{u}rnberg,  91058 Erlangen, Germany (e-mail: mueller@lnt.de).}
\thanks{Bernard~H.~Fleury is with the Department of Electronic Systems, Aalborg University, 9220 Aalborg, Denmark (e-mail: fleury@es.aau.dk).}
}
\markboth{IEEE Transactions on Information Theory, 2018}{\c{C}akmak \MakeLowercase{\text{et al.}}: Capacity Scaling in MIMO Systems with General Unitarily Invariant Random Matrices} 
\maketitle
\begin{abstract}
We investigate the capacity scaling of MIMO systems with the system dimensions. To that end we quantify how the mutual information varies when the number of antennas (at either the receiver or transmitter side) is altered. For a system comprising $R$ receive and $T$ transmit antennas with $R>T$, we find the following: By removing as many receive antennas as needed to obtain a square system (provided the channel matrices before and after the removal have full rank) the maximum resulting loss of mutual information over all signal-to-noise ratios (SNRs) depends only on $R$, $T$ and the matrix of left-singular vectors of the initial channel matrix, but not on its singular values. In particular, if the latter matrix is Haar distributed the ergodic rate loss is given by $\sum_{t=1}^{T}\sum_{r=T+1}^{R}\frac{1}{r-t}$ nats. Under the same assumption, if $T,R\to \infty$ with the ratio $\phi\triangleq T/R$ fixed, the rate loss normalized by $R$ converges almost surely to $H(\phi)$~bits with $H(\cdot)$ denoting the binary entropy function. We also quantify and study how the mutual information as a function of the system dimensions deviates from the traditionally assumed linear growth in
the minimum of the system dimensions at high SNR. 
\end{abstract}
\begin{keywords}
multiple-input--multiple-output, mutual information, high SNR, multiplexing gain, unitary invariance, binary entropy function, Haar random matrix, S-transform
\end{keywords}

\def\mathlette#1#2{{\mathchoice{\mbox{#1$\displaystyle #2$}}%
                               {\mbox{#1$\textstyle #2$}}%
                               {\mbox{#1$\scriptstyle #2$}}%
                               {\mbox{#1$\scriptscriptstyle #2$}}}}
\newcommand{\matr}[1]{\mathlette{\boldmath}{#1}}
\newcommand{\RR}{\mathbb{R}}
\newcommand{\CC}{\mathbb{C}}
\newcommand{\NN}{\mathbb{N}}
\newcommand{\ZZ}{\mathbb{Z}}
\newcommand{\EE}{\mathbb{E}}
\section{Introduction}
\IEEEPARstart{T}{he} capacity of a multiple-input--multiple-output (MIMO) system with perfect channel state information at the receiver can be expressed as \cite{shamai}
\begin{equation}
\min (T,R)\log_2 {\rm SNR}+O(1) \label{approx}
\end{equation}
whenever the channel matrix has full rank almost surely. Here $T$ and $R$ denote the number of receive and transmit antennas, respectively, and $O(1)$ is a bounded function of the signal-to-noise ratio (SNR) \emph{that does depend on $T$ and $R$, in general.} The scaling term $\min(T,R)$ is often referred to as the multiplexing gain. The explicit expression for the capacity scaling when the number of transmit or receive antennas varies, is difficult to calculate. Closed-form expressions can be obtained only in few particular cases, e.g.\ for a channel matrix of asymptotically large size with independent identically distributed (iid) zero-mean entries \cite{verdu}. 

In order to better understand capacity scaling in MIMO channels with more complicated structures, such as correlation at transmit and/or receive antennas, related works use either implicit solutions, e.g.\ \cite{lozano}, or consider asymptotically high SNR and express the capacity in terms of the multiplexing gain, e.g.\ \cite{mimo3}. However, implicit solutions provide limited intuitive insight into the capacity scaling and the multiplexing gain is a crude measure of capacity. 

In this article, we consider an affine approximation to the mutual information at high SNR. In particular, we investigate how mutual information varies when the numbers of antennas (at either the receiver or transmitter side) is altered. Our affine approximation to the mutual information leads to a generalization of the multiplexing gain which we call the \emph{multiplexing rate}. Such an approximation was formerly addressed in \cite{shamai}, which was the baseline of many published works, e.g.\ \cite{lozano1,jindal,Chen}.

We study the variation of the multiplexing rate when the number of antennas either at the transmit or receive side varies. More specifically, we formulate the reduction of the number of antennas by means of a convenient linear projection operator. This formulation allows us to asses the mutual information at high SNR in insightful and explicit closed form. We consider unitarily invariant matrix ensembles \cite{Percy} which model a broad class of MIMO channels \cite{tulino}. Specifically, our sole restriction is that the matrix of left (right) singular vectors of the initial channel matrix, i.e. before the reduction, is Haar distributed. Informally speaking, this implies that the channel matrix involves some symmetry with respect to the antennas. An individual antenna contributes in a ``democratic fashion" to the mutual information. There is no preferred antenna in the system. In fact, such an invariance seems a natural property for the mutual information to depend on $T$ and $R$ only, but not on the specific antennas in the system.  

Since the term $O(1)$ in \eqref{approx} is a bounded function of SNR, the expression \eqref{approx} has more than once led to misinterpretations in the wireless communications community: 
\begin{itemize}
	\item [(i)] when the number of antennas at either the transmit or receive side varies, while the minimum of the system dimensions (i.e. the numbers of transmit and receive antennas) is kept fixed, the mutual information does not vary at high SNR; 
	\item [(ii)] the mutual information scales linearly with the minimum of the system dimensions at high SNR.
\end{itemize}
It is the goal of this paper to debunk these misinterpretations. We summarize our main contributions as follows:
\begin{enumerate} 
\item  As regards misinterpretation (i) we find the following: For a system comprising $R$ receive and $T$ transmit antennas with $R>T$ ($T>R$), let some of the receive (transmit) antennas be removed from the system to obtain a system with $\tilde R\geq T$ receive ($\tilde T\geq R$ transmit) antennas. Note that $\min(T,{\tilde R})=T$ ($\min(\tilde T,{R})=R$). Then, the loss of mutual information in the high SNR limit depends only on $R$, $T$ and $\tilde R$ ($\tilde T$) and the matrix of left (right)-singular vectors of the initial $R\times T$ channel matrix, \emph{but not on its singular values}. Assuming the matrix of left-(right-)singular vectors to be Haar distributed, the ergodic rate loss is given by $\sum_{t=1}^{T}\sum_{r=\tilde R+1}^{R}\frac{1}{r-t}$ ($\sum_{r=1}^{R}\sum_{t=\tilde T+1}^{T}\frac{1}{t-r}$) nats.
\item As regards misinterpretation (ii), we quantify how the mutual information as a function of the number of antennas deviates from the approximate linear growth (versus the minimum of the system dimensions) in the high SNR limit. This deviation does depend on the singular values of the channel matrix. We show that in the large system limit the deviation is additive for compound unitarily invariant channels and can be easily expressed in terms of the S-transform (in free probability) of the limiting eigenvalue distribution (LED) of the Gramian of the channel matrix. 
\item We show that the aforementioned results on the variation of mutual information in the high SNR limit provide least upper bounds on said variation over all SNRs. Thus, these results have a universal character related to the SNR.
\item We derive novel formulations of the mutual information and the multiplexing rate in terms of the S-transform of the empirical eigenvalue distribution of the Gramian of the channel matrix. These formulations establish a fundamental relationship between the mutual information and the multiplexing rate.  
\end{enumerate}

\subsection{Related Work}
The work presented in paper \cite{lozano1} is related to contribution~1). Specifically, in \cite[Section~3]{lozano1} the authors unveiled misinterpretation (i) for iid Gaussian unitarily invariant channel matrices.  

We elucidate misinterpretation (i) by considering arbitrary unitarily invariant matrices that need neither be Gaussian nor iid. In particular, our results and/or statements do not require any assumptions on the singular values of the channel matrix. They solely depend on the singular vectors of the channel matrix, e.g. see contribution 1). Our proof technique –- which is based on an algebraic manipulation of the projection operator that we introduce –- is different from any related work we are aware of. 

\subsection{Organization}
The paper is organized as follows. In Section~\ref{not}, we introduce the preliminary notations and definitions. In Section~\ref{sysmod}, we present the system model. In Section IV, we introduce new formulations of the mutual information and the multiplexing rate in terms of the S-transform.  Section~V and VI are dedicated to lift misinterpretations (i) and (ii), respectively. Conclusions are outlined in Section~\ref{conc}. The technical lemmas and the proofs are located in the Appendix.

\section{Notations \& Definitions}\label{not}
\begin{notation}We denote the binary entropy function as
\begin{equation}
H(p)\triangleq{\begin{cases}
	(p-1)\log_2(1-p) -p\log_2p & \: p\in (0,1)\\
	0 & \: p \in \{0,1\} 
	\end{cases}}.
\end{equation}
\end{notation}
\begin{notation}
For an $N \times K$ matrix $\matr X$, ${\rm F}^K_{\matr X}$ denotes the empirical eigenvalue distribution function of $\matr X^\dagger\matr X$, i.e. 
\begin{equation}
{\rm F}^K_{\matr X}(x)=\frac{1}{K}\vert\left\{\lambda_i \in \mathcal L: \lambda_i { \leq }x\right\}\vert
\end{equation}
with $\mathcal L$ and $\vert\cdot \vert$ denoting the set of eigenvalues of $\matr X^\dagger \matr X$ and the cardinality of a set, respectively. Here, $(\cdot)^\dagger$ denotes conjugate transposition. Moreover, for $N,K\to\infty$ with $\phi=K/N$ fixed, if ${\rm F}^K_{\matr X}$ converges weakly and almost surely to a LED function, this limit is denoted by ${\rm F}_{\matr X}$. 
\end{notation}

\begin{definition}
A $K$-dimensional projector $\matr P_\beta$ with $\beta\leq 1$  is a $\beta K\times K$ matrix with entries $(\matr  P_\beta)_{ij} = \delta_{ij}, {\forall i,j}$, where $\delta_{ij}$ denotes the Kronecker delta. 
\end{definition}

\begin{definition}
For an $N\times K$ matrix $\matr X\ne \matr 0$, we define the normalized rank of $\matr X^\dagger\matr X$ as
\begin{equation}
\alpha_{\matr X}^K\triangleq 1-{\rm F}^{K}_{\matr X}(0) \label{alpha}
\end{equation}
and the distribution function of non-zero eigenvalues of $ \matr X^\dagger \matr X$ as
\begin{equation}
\tilde {\rm F}^{K}_{\matr X}(x)\triangleq\frac {1} {\alpha_\matr X^K} \left\{\left({\alpha_\matr X^K}-1 \right)u(x)+{\rm F}^{K}_{\matr {X}}(x)\right\} \label{newdis}
\end{equation} 
with $u(x)$ denoting the unit-step function.
\end{definition}
The S-transform introduced by Voiculescu in the context of free probability is defined as follows:
\begin{definition}\cite{berco}
Let ${\rm F}$ be a probability distribution function with support in $[0,\infty)$.  Moreover, let $\alpha\triangleq 1-{\rm F}(0)\neq 0$. Define
\begin{equation}
\Psi(z)\triangleq \int \frac{zx}{1-zx}\;{\rm dF}(x), \quad -\infty<z<0.  \label{psi}
\end{equation}
Then, the {\em S-transform} of ${\rm F}$ is defined as
\begin{equation}
{\rm S}(z)\triangleq\frac{z+1}{z}\Psi^{-1}(z), \qquad -\alpha<z<0\label{defs}
\end{equation}
where $\Psi^{-1}$ denotes the composition inverse of $\Psi$.
\end{definition}

\begin{notation}
For an $N\times K$ matrix $\matr X\ne \matr 0$, the S-transform of ${\rm F}^{K}_{\matr X}$ is denoted by ${\rm S}^{K}_{\matr X}$. For $N,K\to \infty$ with $\phi=K/N$ fixed, if $\matr X^\dagger\matr X$ has a LED function ${\rm F}_{\matr X}$ almost surely, the S-transform of ${\rm F}_{\matr X}$ is denoted by ${\rm S}_{\matr X}$. Similarly, we define $\Psi_{\matr X}^K$ and $\Psi_{\matr X}$. 
\end{notation}

All large-system limits are assumed to hold in the almost sure sense, unless explicitly stated otherwise. Where obvious, limit operators indicating the large-system limit are omitted for the sake of compactness and readability.
{\section{System Model}\label{sysmod}}
Consider the MIMO system 
\begin{equation}
\matr y = \matr {Hx} + \matr n \label{mimo}  
\end{equation}
where $\matr H\in \CC^{R\times T}$, $\matr x\in \CC^{T\times 1}$, $\matr y\in \CC^{R\times 1}$, $\matr n\in \CC^{R\times 1}$ are respectively the channel matrix, the input vector, the output vector, and the noise vector. The entries of $\matr x$ and $\matr n$ are assumed to be independent (circularly symmetric) complex Gaussian distributed with zero mean and variances $\sigma_x^2$ and $\sigma_n^2$, respectively. The transmit SNR is defined as 
\begin{align}
\gamma&\triangleq\frac{\sigma_x^2}{\sigma_n^2}, \quad 0<\gamma <\infty.\label{SNR}  
\end{align}
The mutual information per transmit antenna of the communication link (\ref{mimo}) is given by \cite{tom}
\begin{equation}
\mathcal I(\gamma;{\rm F}_{\matr H}^T)\triangleq\int \log_2(1+\gamma x)\;{\rm dF}^T_{\matr H}(x).  \label{mut}
\end{equation}
Similarly, $\mathcal I(\gamma;{\rm F}^R_{\matr H^\dagger})$ is the mutual information per receive antenna of (\ref{mimo}). 
\vspace{0.2cm}
\subsection{Antenna Removal Via Projector}\label{proje}
In the sequel, we formulate the variation of mutual information when the number of antennas either at the transmit or receive side of reference system \eqref{mimo} changes. This variation is achieved by removing a certain fraction of antennas at the corresponding side of the system. We formulate this removal process via a multiplication of the channel matrix with a rectangular projector matrix. 

We distinguish between two cases: the removal of receive antennas and the removal of transmit antennas. In the first case, the system model resulting after removing a fraction $1-\beta$ of receive antennas in (\ref{mimo}) reads
\begin{align}
\matr y_\beta & =  \matr P_{\beta}(\matr H\matr x + \matr n) \\
& = \matr P_{\beta}\matr H \matr x + \matr n_\beta.  \label{smodel2}
\end{align}
The $\beta R \times R$ matrix $\matr P_{\beta}$ is an $R$-dimensional projector which removes a fraction $1-\beta$ of receive antennas in reference system \eqref{mimo} and $\matr n_\beta= \matr P_{\beta}\matr n$. The mutual information of the MIMO system (\ref{smodel2}) is equal to
\begin{equation}
T\mathcal I(\gamma ;{\rm F}_{\matr P_\beta\matr H}^T).
\end{equation}
Similarly, removing a fraction $1-\beta$ of transmit antennas in (\ref{mimo}) yields the $R\times \beta T$ system
\begin{align}
\matr{\tilde y}= \matr H\matr P_{\beta}^\dagger\matr x_\beta + \matr n.\label{smodel}
\end{align}
Here, $\matr x_\beta$ is the vector obtained by removing from $\matr x$ the $(1-\beta)T$ entries fed to the removed transmit antennas, i.e. $\matr x_\beta=\matr P_\beta \matr x$ with $\matr P_\beta$ being a $T$-dimensional projector. The mutual information of system (\ref{smodel}) reads 
\begin{equation}
\beta T \mathcal I(\gamma ;{\rm F}_{\matr H\matr P_\beta^\dagger}^{\beta T}).
\end{equation}

\subsection{Unitary Invariance}\label{uniinv}
For channel matrices that are unitarily invariant from right, i.e. $\matr H$ and $\matr H\matr U$ admit the same distribution for any unitary matrix $\matr U$ independent of $\matr H$, it does not matter which transmit antennas are removed. Only their number counts. The same applies to channel matrices that are unitarily invariant from left for the removal of receive antennas. For channel matrices that involve an asymmetry with respect to the antennas, i.e.\ some antennas contribute more to the mutual information than others, it must be specified which antennas are to be removed and the mutual information will depend (typically in a complicated manner) on the choice of the removed antennas. In this paper, we restrict the considerations to cases where only the number of removed antennas matters, since this leads to explicit closed-form expressions. 

For asymmetric channel matrices, one could obtain antenna-independent scaling laws if all antennas with equal contributions to mutual information are grouped together and all those groups are decimated proportionally. Doing so would heavily complicate the formulation of the antenna removal by means of multiplication with projector matrices. However, we can utilize the fact that for the channel in \eqref{mimo}, mutual information is invariant to multiplication with unitary matrices, i.e.\
\begin{equation}
\mathcal I(\gamma ;{\rm F}^T_{\matr V\matr H\matr U}) = \mathcal I(\gamma ;{\rm F}^T_{\matr H})
\end{equation}
for all unitary matrices $\matr U$ and $\matr V$. Since the channel matrix $\matr {UHV}$ is bi-unitarily invariant for all random unitary matrices $\matr U$ and $\matr V$ independent of $\matr H$, and has the same mutual information as $\matr H$, we can assume without loss of generality that $\matr H$ is unitarily invariant from left for receive and from right for transmit antenna removal, respectively, and keep the projector formulation of Section~\ref{proje} as it is.

The multiplication with a random unitary matrix followed by a fixed selection of antennas has statistically the same effect as a random selection of antennas. It provides the symmetry required to make mutual information only depend on the number of removed antennas and not on which antennas~are~removed. 

\subsubsection*{Equivalence to the ergodic capacity variation}The ergodic capacity of channel \eqref{mimo} is \cite{Telatar}
\begin{equation}
\bar{\mathcal C}(\gamma, {\rm F}_{\matr H}^{T})\triangleq \operatorname*{\max}_{\substack{
		\matr Q\geq 0\\
		{\rm{tr}}(\matr Q)=T}}{\rm E}\left[\mathcal I(\gamma;{\rm F}_{\matr H\sqrt{\matr Q}}^T)\right]. \label{ergodiccap}
\end{equation}
Conceptually, we relax the iid assumption on the entries of $\matr x$ in \eqref{mimo} and assume arbitrary correlation between these entries described by the covariance matrix $\sigma_x^2\matr Q$ where $\matr Q$ is non-negative definite with unit trace and $\sigma_x^2\triangleq \frac{1}{T}{\rm E} [\matr x^\dagger \matr x]$. It is shown in \cite{Telatar} that for channel matrices that are unitarily invariant from right the ergodic capacity in \eqref{ergodiccap} is attained with $\matr Q={\bf I}$, i.e. 
\begin{equation}
\bar{\mathcal C}(\gamma, {\rm F}_{\matr H}^{T})={\rm E}\left[\mathcal I(\gamma;{\rm F}_{\matr H}^T)\right].
\end{equation}
In particular, the unitary invariance property of the channel is not broken by removing some of the transmit or receive antennas. For example, if $\matr H$ is invariant from right, then $\matr H\matr P_{\beta}^\dagger$ is invariant from right too. In summary, for bi-unitarily invariant channel matrices the variation of ergodic mutual informations that results from removing some number of transmit or receive antennas does actually coincide with the corresponding variation of ergodic capacities.


\section{Mutual Information and Multiplexing  Rate}\label{Stransfrom}

The normalized mutual information in (\ref{mut}) can be decomposed as
\begin{align}
\mathcal I(\gamma;{\rm F}_{\matr H}^T)&=\underbrace{\alpha_\matr H^T \int \log_2(\gamma x)\;{\rm d}{\tilde{\rm F}}^T_{\matr H}(x)}_{\mathcal I_{0}(\gamma;{\rm F}_{\matr H}^T)}\nonumber \\ &\quad +\underbrace{\alpha_{\matr H}^T\int \log_2\left(1+\frac{1}{x\gamma}\right)\;{\rm d}{\tilde{\rm F}}^T_{\matr H}(x)}_{\Delta\mathcal{I}(\gamma;{\rm F}^T_{\matr {H}})}. \label{dev}  
\end{align}
We refer to the first term $\mathcal I_{0}(\gamma;{\rm F}^T_{\matr H})$ as the {\em multiplexing rate} per transmit antenna. The factor $\alpha_\matr H^T$ is the multiplexing gain normalized by the number of transmit antennas. The second term $\Delta\mathcal{I}(\gamma;{\rm F}^T_{\matr H})$ is the difference between the mutual information per transmit antenna and the multiplexing rate per transmit antenna. To alleviate the terminology, in the sequel we skip the explicit reference to the normalization by the number of transmit (or receive, see later) antennas when we refer to quantities such as those arising in \eqref{dev}. Whether the quantities considered are absolute or normalized will be clear from the context. We have
\begin{eqnarray}
\lim_{\gamma \to \infty}\Delta\mathcal{I}(\gamma;{\rm F}^T_{\matr H})=0. \label{deltadif}
\end{eqnarray}
If $\matr H^\dagger\matr H$ is invertible we have
\begin{align}
\mathcal{I}_0(\gamma;{\rm F}^T_{\matr {H}})&=\frac{1}{T} \log_2\det\left(\gamma\matr H^\dagger\matr H\right)\\
\Delta\mathcal{I}(\gamma;{\rm F}^T_{\matr {H}}) &= \frac{1}{T}\log_2\det\left(\bf I+{(\gamma\matr H^\dagger\matr H)^{-1}}\right)
\end{align}
with $\bf I$ denoting the identity matrix.  

The affine approximation of the ergodic mutual information at high SNR introduced in \cite{shamai}, see also \cite[Eq. (9)]{lozano1} for a compact formulation of it, coincides with the ergodic formulation of our definition of the multiplexing rate.

We next uncover a fundamental link between the mutual information and the multiplexing rate. This result makes use of the minimum-mean-square-error (MMSE) achieved by the optimal receiver for \eqref{mimo} normalized by the number of transmit antennas
\begin{equation}
\eta_{\matr H}^T(\gamma)\triangleq \int \frac{{\rm dF}^T_{\matr H}(x)}{1+\gamma x}. \label{eta}
\end{equation}
Clearly, $\eta_{\matr H}^T(\gamma)$ is a strictly decreasing function of $\gamma$ with range $(1-\alpha_{\matr H}^T,1)$ \cite{tulino}. 
\begin{theorem}\label{main}
Define
\begin{equation}
f_{\matr H}(x)\triangleq H(x)-\int_{0}^{x}\log_2{\rm S}^T_{\matr H}(-z)\;{\rm d}z, \quad 0\leq x\leq \alpha_{\matr H}^T. \label{goodform}
\end{equation}
Then, we have
\begin{align}
\mathcal I(\gamma;{\rm F}^T_{\matr H})&=f_{\matr H}(1-\eta^T_{\matr H})+(1-\eta_{\matr H}^T)\log_2\gamma\label{etaalpha} \\
\mathcal I_{0}(\gamma;{\rm F}^T_{\matr H})&= f_{\matr H}(\alpha^T_{\matr H})+\alpha_{\matr H}^T \log_2\gamma.\label{mult}
\end{align}
For short we write $\eta_{\matr H}^T$ for $\eta_{\matr H}^T(\gamma)$ in \eqref{etaalpha}.
\proof{See Appendix \ref{Lemma4new}.}
\end{theorem}
Note that by definition the function $f_{\matr H}(x)$ in \eqref{goodform} may involve $\alpha_{\matr H}^T$ via ${\rm S}_{\matr H}^{{T}}(z)$. We have the following implications of Theorem~\ref{main}: i) the mutual information can be directly expressed as a function of the (normalized) MMSE; ii) for any expression of the mutual information as a function of the MMSE $\eta_{\matr H}^T$ the multiplexing rate results immediately by substituting $\eta_{\matr H}^T$ for $1-\alpha_{\matr H}^T$, e.g. see Examples~\ref{Example1} and \ref{Example2}; iii) the converse of ii) is not always true: given an expression of the multiplexing rate as a function of $\alpha_{\matr H}^T$, substituting $\alpha_{\matr H}^T$ for $1-\eta_{\matr H}^T$ does not always yield the mutual information. An intermediate step is required here to guarantee that the converse holds: the expression needs first to be recast as a function of $f_\matr H$. Then substituting $\alpha_{\matr H}^T$ for $1-\eta_{\matr H}^T$ in the latter function yields the mutual information.

If any probability distribution function with support in $[0,\infty)$, say $\rm F$, is substituted for ${\rm F}_{\matr H}^T$ in \eqref{dev} the formulas \eqref{etaalpha} and \eqref{mult} remain valid provided $\mathcal I(\gamma;{\rm F})$ is finite and $\log(x)$ is absolutely integrable over ${\rm \tilde F}$, respectively\footnote{Here $\tilde{\rm F}$ is defined by substituting ${\rm \tilde F}_{\matr H}^T$ for ${\rm F}$ in \eqref{newdis}.}. The absolute integrability condition holds if, and only if, $\mathcal I(\gamma;{\rm F})$ and $\Delta\mathcal I(\gamma;{\rm F})$ are finite, see \eqref{L9}-\eqref{10b}. In the sequel we substitute ${\rm F}_{\matr H}$ for ${\rm F}^T_{\matr H}$ to calculate $\mathcal I(\gamma;{\rm F}_{\matr H})$ and $\mathcal I_0(\gamma;{\rm F}_{\matr H})$. In Appendix~C, we provide some sufficient conditions that guarantee the almost sure convergence of  $\mathcal I(\gamma;{\rm F}^T_{\matr H})$ and $\mathcal I_0(\gamma;{\rm F}^T_{\matr H})$ to  $\mathcal I(\gamma;{\rm F}_{\matr H})$ and $\mathcal I_0(\gamma;{\rm F}_{\matr H})$, respectively.  We conclude that these asymptotic convergence are reasonable assumptions in practice, for the details see Appendix~C.

It is well-known that the S-transform of the LED of the product of asymptotically free matrices is the product of the respective S-transforms of the LEDs of these matrices. Therefore, for MIMO channel matrices that involve a compound structure, Theorem~\ref{main} provides a means to \emph{analytically} calculate the large-system limits of the mutual information and multiplexing rate in terms of the large-system limits of the \emph{MMSE} and the \emph{multiplexing gain}. We next address two relevant random matrix ensembles that share this structure.

\begin{example}\label{Example1}
We consider the concatenation of vector-valued fading channels described in \cite{ralfa}. Specifically, we assume that the channel matrix $\mathbf{\matr H}$ factorizes according to
\begin{equation}
\matr{\matr H}=\matr{X}_N\matr{X}_{N-1}\cdots\matr{X}_2\matr{X}_1\label{matrixx}
\end{equation}
where the entries of the $K_{n}\times K_{n-1}$ matrix $\matr{X}_n $ are iid with zero mean and variance $1/K_n$ for $n\in [1,N]$. Furthermore, the ratios $\rho_n\triangleq K_n/K_0$  $n\in [1,N]$ are fixed as $K_n\to \infty$. Moreover, let $\eta_\matr H$ denote the large-system limit of MMSE $\eta_\matr H^T$. By invoking Theorem~\ref{main} we obtain an analytical expression of the large-system limit of the mutual information in terms of (the large-system limit of) the MMSE\footnote{An explicit expression of the MMSE as a function of SNR is difficult to obtain. However, $\eta_{\matr H}(\gamma)$ can be solved numerically from the fixed point equation $\gamma=\frac{\eta_{\matr H}(\gamma)}{1-\eta_{\matr H}(\gamma)}\prod_{n=1}^{N}\frac{\eta_{\matr H}(\gamma)+\rho_n-1}{\rho_n}$ \cite[Eq.~(21)]{ralfa}.} as 
\begin{align}
&\mathcal I(\gamma;{\rm F}_{\matr H})= H(\eta_\matr H)+(1-\eta_{\matr H})(\log_2\gamma-N\log_2e)\nonumber \\
&+(1-\eta_{\matr H})\left[\sum_{n=1}^{N}\frac{\rho_n}{1-\eta_\matr H} H\left(\frac{1-\eta_\matr H}{\rho_n} \right)+\log_2 \frac{1-\eta_\matr H}{\rho_n}\right] \label{iidmut}.
\end{align}
Furthermore, as regards the multiplexing rate, we have 
\begin{align}
\mathcal I_{0}(\gamma;{\rm F}_{\matr H})=& H(\alpha_\matr H)+\alpha_{\matr H}(\log_2\gamma-N\log_2e) \nonumber \\
& +\alpha_{\matr H}\left[\sum_{n=1}^{N}\frac{\rho_n}{\alpha_\matr H} H\left(\frac{\alpha_\matr H}{\rho_n} \right)+\log_2 \frac{\alpha_\matr H}{\rho_n} \right]\label{iidmultip}
\end{align}
with $\alpha_{\matr H}=\min (1,\rho_1,\cdots, \rho_N)$.
\proof{See Appendix~\ref{PExample1}.}
\end{example}

\begin{example}\label{Example2}
We consider a Jacobi matrix ensemble, see e.g. \cite{Alain}, \cite{Edelman}, which find application in the context of optical MIMO communications \cite{Ronen},\cite{Aris14}. Accordingly, the channel matrix factorizes as
\begin{equation}
\matr H = \matr P_{\beta_2} \matr U \matr P_{\beta_1}^{\dagger} \label{jacobi}
\end{equation}
where $\matr U$ is an $N\times N$ Haar unitary matrix. From Theorem~\ref{main} we obtain 
\begin{align}
\mathcal I(\gamma;{\rm F}_{\matr H})=& H(\eta_\matr H)+(1-\eta_{\matr H})\log_2 \gamma \nonumber \\
&-\frac{H(\beta_1(1-\eta_{\matr H}))}{\beta_1}+\frac{\beta_2}{\beta_1}H\left( \frac{\beta_1}{\beta_2}(1-\eta_{\matr H})\right) \label{jacobimut}
\end{align}
where $\eta_{\matr H}=\eta_{\matr H}(\gamma)$ is given by
\begin{equation}
\eta_\matr H(\gamma)=1+\frac{-(1+\kappa\gamma)+\sqrt{(1+\kappa\gamma)^2-4\beta_1\beta_2\gamma(1+\gamma)}}{2\beta_1(1+\gamma)}  \label{etajacobi}
\end{equation}
with $\kappa\triangleq\beta_1+\beta_2$. Moreover, we have 
\begin{align}
\mathcal I_0(\gamma;{\rm F}_{\matr H})=& H(\alpha_{\matr H})+\alpha_{\matr H}\log_2 \gamma \nonumber \\
&-\frac{H(\beta_1\alpha_{\matr H})}{\beta_1}+\frac{\beta_2}{\beta_1}H\left( \frac{\beta_1}{\beta_2}\alpha_{\matr H}\right) \label{jocobimultip}
\end{align}
with $\alpha_{\matr H}=\min (1,\beta_2/\beta_1)$. 
\proof{See Appendix~\ref{PExample2}.}
\end{example}
 
 \section{The Universal Rate Loss}
In Section~1 we underlined the following misinterpretation of mutual information: \emph{when the number of antennas (at either the transmit or receive side) varies, with the minimum of the system dimensions kept fixed, the mutual information does not vary at high SNR.} It is the goal of this section to elucidate this misinterpretation. To do so we need to distinguish between two cases as to reference system \eqref{mimo}: (i) $T\leq R$; (ii) $T \geq R$. In the former (latter) case we consider the removal of receive (transmit) antennas. In both cases the reduction of antennas is constrained in a way that keeps the minimum of the numbers of antennas at both sides fixed.
\subsection{Case~(i) -- Removing receive antennas} 
We remove a fraction $(1-\beta)$ of receive antennas in system \eqref{mimo} to obtain system \eqref{smodel2}. We constrain the reduction with the condition $\beta\geq \phi\triangleq T/R$ to ensure that $\min (T,\beta R)=T$. This reduction of the number of receive antennas causes a loss in mutual information given by $T\mathcal I(\gamma;{\rm F}^T_{\matr H})-T\mathcal I(\gamma;{\rm F}^T_{\matr P_{\beta}\matr H})$. Normalizing this loss with the number of transmit antennas yields 
\begin{equation}
	\mathcal I(\gamma; {\rm F}_{\matr H}^T) -\mathcal I(\gamma;{\rm F}^T_{\matr P_{\beta}\matr H}). \label{receiveloss} 
\end{equation}
Assume that $\matr H$ and $\matr P_{\beta}\matr H$ have both full rank almost surely. Then, we define the rate loss
\begin{align}
	\chi_{\matr H}^T(R,\beta R)&\triangleq \lim_{\gamma\to \infty}\mathcal I(\gamma; {\rm F}_{\matr H}^T) -\mathcal I(\gamma;{\rm F}^T_{\matr P_{\beta}\matr H}),~\beta \geq \phi. \label{rateloss} \\ &=\mathcal I_0(\gamma; {\rm F}_{\matr H}^T) -\mathcal I_0(\gamma;{\rm F}^T_{\matr P_{\beta}\matr H}) \\ &= \frac{1}{T}\log_2 \frac{\det\matr H^\dagger \matr H}{\det\matr H^\dagger \matr P_{\beta}^\dagger \matr P_\beta \matr H}.	\label{ratef}
\end{align}
The full-rank assumption implies $\alpha^T_{\matr H}=\alpha^T_{\matr P_{\beta}\matr H}$ which is essential in the definition \eqref{rateloss}. Otherwise the difference in \eqref{rateloss} diverges as $\gamma \to \infty$. Next, we present some general important properties of the rate loss $\chi_{\matr H}^T(R,\beta R)$.
\subsubsection{Universality related to SNR}
Note that both quantities in \eqref{receiveloss} increase with the SNR. It is shown in Appendix~\ref{PSupI} that their difference, i.e. \eqref{receiveloss}, increases with the SNR too. Hence, the rate loss $\chi_{\matr H}^T(R,\beta R)$ provides the least upper bound on the mutual information loss over the entire SNR range. 
\begin{remark}\label{SupI}
Let $\matr H$ and $\matr P_{\beta}\matr H$ have both full rank almost surely. Then, we have
\begin{align}
\chi_{\matr H}^T(R,\beta R)&= \sup_{\gamma}\{\mathcal I(\gamma; {\rm F}_{\matr H}^T) -\mathcal I(\gamma;{\rm F}^T_{\matr P_{\beta}\matr H})\}.
\end{align}
\proof{See Appendix~\ref{PSupI}}
\end{remark}
\subsubsection{Equivalence to capacity loss}\label{caprel} Let us denote the capacity of channel \eqref{smodel2} as
 \begin{equation}
 {\mathcal C}(\gamma; {\rm F}_{\matr P_{\beta}\matr H}^{T})\triangleq \operatorname*{\max}_{\substack{
 		\matr Q\geq 0\\
 		{\rm{tr}}(\matr Q)=T}}\mathcal I(\gamma;{\rm F}_{\matr P_{\beta}\matr H\sqrt{\matr Q}}^T). \label{ccap}
 \end{equation}
It turns out that \eqref{rateloss} also holds when the mutual informations in \eqref{rateloss} are replaced by the respective capacities.
\begin{remark}\label{Cap}
Let $\matr H$ and $\matr P_{\beta}\matr H$ have both full rank almost surely. Then, we have
	\begin{equation} 
	\chi_{\matr H}^T(R,\beta R)=\lim_{\gamma\to \infty}\mathcal C(\gamma; {\rm F}_{\matr H}^T) -\mathcal C(\gamma;{\rm F}_{\matr P_{\beta}\matr H}^T).
	\end{equation}
\proof{See Appendix~\ref{PCap}.}
\end{remark}
\subsubsection{The invariance related to singular values}
Though $\chi_{\matr H}^T(R,\beta R)$ is defined through the distribution functions ${\rm F}^T_{\matr H}$ and ${\rm F}^T_{\matr P_{\beta}\matr H}$ in \eqref{rateloss}, it actually depends solely on the matrix of left singular vectors of $\matr H$:
\begin{theorem}\label{Theorem2}
Let $\matr H$ and $\matr P_{\beta}\matr H$ have both full rank almost surely. Consider the spectral decomposition 
\begin{equation}
\matr H=\matr L \matr S \matr R \label{SVD}
\end{equation}
where $\matr L$ is a $R\times R$ unitary matrix whose columns are the left singular vectors of $\matr H$, $\matr R$ is a $T\times T$ unitary matrix whose columns are the right singular vectors of $\matr H$ and the diagonal entries of $\matr S$ are the singular values of $\matr H$. Then, we have
\begin{equation}
\chi_{\matr H}^T(R,\beta R)=-\frac{1}{T}\log_2\det \matr P_{\phi}\matr L^\dagger \matr P_{\beta}^\dagger \matr P_{\beta} \matr L \matr P_{\phi}^\dagger. \label{result1}
\end{equation}
\proof{See Appendix~\ref{PTheorem2}.}
\end{theorem}
\subsubsection{Statistical properties resulting from unitarily invariance}
Let $\matr H^\dagger \matr H$ have full rank almost surely and be unitarily invariant\footnote{
Provided $\matr H^\dagger \matr H$ is unitarily invariant, when $\matr H$ has almost surely full rank, so does $\matr P_{\beta} \matr H$ too, see Appendix~\ref{PTheorem3}.}. Thereby, the matrix of left singular vectors of $\matr H$, i.e. $\matr L$, is  Haar, see \cite[Lemma 2.6]{tulino}. Thus, $\matr P_{\phi}\matr L^\dagger \matr P_{\beta}^\dagger \matr P_{\beta} \matr L \matr P_{\phi}^\dagger$ belongs to the Jacobi matrix ensemble, see Example~\ref{Example2}. In other words, the rate loss $\chi_{\matr H}^T(R,\beta R)$ becomes nothing but minus the $\log\det$ of the Jacobi matrix ensemble normalized by $T$. We also refer the reader to \cite{Alain} for a detailed study of the determinant of the Jacobi matrix ensemble. In particular, from \cite[Proposition~2.4]{Alain}, the rate loss admits the explicit statistical characterization
\begin{equation}
	\chi_{\matr H}^T(R,\beta R)\sim-\frac{1}{T}\sum_{t=1}^{T}\log_2 \rho_t \label{Betacar}
\end{equation}
where $\{\rho_1,\cdots,\rho_T\}$ are independent random variables and $\rho_t \sim ~ {\mathcal Be}\left((\beta R+1-t),(1-\beta)R\right)$. Here, $X\sim Y$ indicates that random variables $X$ and $Y$ are identically distributed. For $a>0$ and $b>0$,  ${\mathcal Be}(a,b)$ denotes the Beta distribution with density 
\begin{equation}
	{\mathcal Be}(x;a, b)=\frac{\Gamma (a+b)}{\Gamma(a)\Gamma (b)}x^{a-1}(1-x)^{b-1},\quad x>0
\end{equation}
where $\Gamma$ is the gamma function. 

\begin{corollary}[Universal Rate Loss]\label{Theorem3}
Let $\matr H^\dagger \matr H$ have full rank almost surely and be unitarily invariant. Define\footnote{The sum over an empty index set is by definition zero.}
\begin{equation} 
\chi^T(R,R')\triangleq\frac{1}{T \ln 2}\sum_{t=1}^{T}\sum_{r=R'+1}^{R}\frac{1}{r-t}\ ,\quad T\leq R'\leq R.\label{rtlos}
\end{equation}
Then, we have 
\begin{equation}
{\rm E}[\chi_{\matr H}^T(R,\beta R)]=\chi^T(R,\beta R). \label{ergodic}
\end{equation}
Moreover, if $R,T\to \infty$ with $\phi=T/R$ fixed, we have almost surely
\begin{equation}
\chi_{\matr H}^T(R,\beta R)\to\frac {H\left(\phi\right)}\phi- \frac\beta\phi H\left(\frac\phi\beta\right). \label{Loss}
\end{equation}

\proof{See Appendix~\ref{PTheorem3}.}
\end{corollary}

The name Universal Rate Loss refers to the fact that the results in Corollary~1 solely refer to the number of transmit and receive antennas before and after the variation. The ergodic rate loss has the additive property
\begin{equation}
\chi^T(R,R')=\chi^T(R,T)-\chi^T(R',T)\ ,\quad T\leq  R'\leq R\label{linearprop}.
\end{equation}
Note that $\chi^T(R,T)$ equals to the ergodic rate loss when we remove as many antennas as needed to obtain a square system. Furthermore, if $R,T\to \infty$ with the ratio $\phi=T/R$ fixed, the first and the second terms of \eqref{linearprop} converge to respectively the first and the second terms of \eqref{Loss}. 

We coin the limit \eqref{Loss} the \emph{binary entropy loss} as it only involves the binary entropy function evaluated at the aspect ratios $\phi$ and $\beta/\phi$ of two channel matrices -- the one before and the one after the removal of the antennas. In particular, for $\beta=\phi$, i.e. we remove as many receive antennas as needed to obtain a square system, the binary entropy loss has the compact expression $H(\phi)/\phi$. 
\subsubsection{A symmetry property of the universal rate loss} 
We show a symmetry property of the universal rate loss in the case when the end system after (completion of the antenna removal) is square, i.e. $\beta=\phi$. Let us start with an illustrative example. Consider two separate MIMO systems one of dimensions $3\times 2$ and one of dimensions $3\times 1$. Let the antenna removal  processes be $3\times 1 \to 1\times 1$ for the former system and $3\times 2 \to 2\times 2$ for the latter. Thus, in both cases two communication links are removed from the reference systems. Let the channel matrices of the reference systems fulfill the conditions stated in Corollary~\ref{Theorem3} (i.e. full-rank and unitary invariance). Both removal process lead to the same the binary entropy loss equal to $3H(1/3)=3H(2/3)=2.75$ bit.
\begin{remark}\label{symmetry}
The function $T\chi^{T}(R,R')$ (see \eqref{rtlos}) satisfies the symmetry property
\begin{equation}
T\chi^T(R,T)=T'\chi^{T'}(R,T')\ , \quad T<R \label{symm}
\end{equation}
where $T'\triangleq R-T$.
\proof{See Appendix~\ref{Psymmetry}}
\end{remark}
Note that the expressions $T\chi^T(R,T)$ and $T'\chi^{T'}(R,T')$ corresponds to the ergodic rate losses for the antenna removal processes $R\times T\to T\times T$ and $R\times T'\to T'\times T'$, respectively. In both cases $T\times (R-T)$ communications links are removed from the reference systems. In other words, for $R$ being fixed the ergodic rate loss $T\chi^T(R,T)$ is a symmetric function of $T$ with respect to $T=R/2$ (see Figure~2).

Since $\chi^{\phi R}(R,\beta R)\leq \chi^{\phi R}(R,\phi R)$, the symmetry property \eqref{symm} implies that the maximum ergodic rate loss is attained when $\phi=\beta=1/2$.
\begin{remark}
Let $\matr H^\dagger \matr H$ have full rank almost surely and be unitarily invariant. Then, for $\phi \leq \beta \leq 1$ we have 
\begin{equation}
\left(\frac{1}{2},\frac{1}{2}\right)= \arg \max_{\phi, \beta} {\rm E}[\chi_{\matr H}^{\phi R}(R,\beta R)].
\end{equation}
\end{remark}

\subsection{Case~(ii) -- Removing transmit antennas} 
We remove a fraction $(1-\beta)$ of transmit antennas in \eqref{mimo} to obtain system \eqref{smodel}. We constrain the reduction of receive antennas with $\beta\geq 1/\phi$ ($\phi=T/R$) to ensure $\min (\beta T, R)=T$. Reducing the number of transmit antennas results in a loss of mutual information equal to $T\mathcal I(\gamma;{\rm F}^T_{\matr H})-\beta T\mathcal I(\gamma;{\rm F}^{\beta T}_{\matr H\matr P_{\beta}^\dagger})$. Normalizing this loss with the number of transmit antennas of the reference system gives 
\begin{equation}
\mathcal I(\gamma; {\rm F}^{T}_{\matr H}) -{\beta}\mathcal I(\gamma;{\rm F}^{\beta T}_{\matr H\matr P_\beta^\dagger}). \label{transmitloss}
\end{equation}
Let $\matr H$ and $\matr H\matr P_{\beta}^\dagger$ have both full rank almost surely. Then, we define the large SNR limit
\begin{equation}
\tilde\chi_{\matr H}^R(T,\beta T)\triangleq \lim_{\gamma\to \infty}\mathcal I(\gamma; {\rm F}^T_{\matr H})-\beta\mathcal I(\gamma;{\rm F}^{\beta T}_{\matr H\matr P_\beta^\dagger})\ , \quad \beta \geq \frac{1}{\phi}. \label{rateloss2}
\end{equation}
Again the full rank assumption is important here. Otherwise the difference \eqref{rateloss2} may diverge as $\gamma \to \infty$.
\begin{corollary}\label{Theorem2b}
Let $\matr H$ and $\matr H\matr P_{\beta}^\dagger$ have both full rank almost surely. Then, we have
	\begin{equation}
	\tilde\chi_{\matr H}^R(T,\beta T)=-\frac{1}{T}\log_2\det \matr P_{\frac 1 \phi}\matr R\matr P_{\beta}^\dagger \matr P_{\beta} \matr R^\dagger \matr P_{\frac 1 \phi}^\dagger \label{result2}
	\end{equation}
where $\matr R$ is a $T\times T$ unitary matrix whose columns are the right singular vectors of $\matr H$, see \eqref{SVD}. 
\end{corollary}
Note that the right-hand side in (\ref{result2}) is obtained by formally replacing $\phi$ with $\phi^{-1}$ in the right-hand side of \eqref{Loss}. This follows from the identity
\begin{equation}
\beta\mathcal I(\gamma;{\rm F}_{\matr H\matr P_\beta^\dagger}^{\beta T})=\frac{1}{\phi}\mathcal I(\gamma;{\rm F}_{\matr P_\beta\matr H^\dagger}^{R}).
\end{equation}
This substitution is valid for any result that refers to mutual information, e.g. as in\ Corollary~\ref{Theorem3}. However, it does not apply in general to capacity related results, such as in Remark~\ref{Cap}, due to the placement of the projection operator on the transmitter side. 
\subsection{The rate loss with antenna power profile}
In this subsection we address the rate loss $\chi_{\matr H}^T$ for a channel model that takes into consideration the power imbalance at the transmitter and receiver sides:
\begin{equation}
\matr H={{\matr \Lambda_{\rm R}}}\matr {\tilde H}{\matr \Lambda_{\rm T}}.\label{gmodel}
\end{equation}
Here, the matrices $\matr \Lambda_{\rm R}\in \CC^{R\times R}$ and $\matr \Lambda_{\rm T}\in \CC^{T\times T}$ are diagonal, full-rank, and deterministic. The matrix $\matr \Lambda_{\rm R}$ ($\matr \Lambda_{\rm T}$) represents the power imbalance at receive (transmit) side. 

We generalize Theorem~\ref{Theorem2} for the model \eqref{gmodel} as (see Appendix~\ref{PTheorem2})
\begin{equation}
\chi_{\matr H}^T(R,\beta R)=\frac{1}{T}\log_2\frac{\det \matr P_{\phi}\matr {\tilde L}^\dagger \Theta_1 \matr {\tilde L} \matr P_{\phi}^\dagger}{\det \matr P_{\phi}\matr {\tilde L}^\dagger \Theta_\beta \matr {\tilde L} \matr P_{\phi}^\dagger}\label{general}
\end{equation}
where $\Theta_\beta\triangleq{{{\matr \Lambda_{\rm R}}}}^\dagger \matr P_{\beta}^\dagger\matr P_{\beta}{{{\matr \Lambda_{\rm R}}}}$ for $\beta\leq 1$ and $\matr {\tilde L}$ is a $R\times R$ unitary matrix whose columns are the left singular vectors of $\matr {\tilde H}$, see \eqref{SVD}. Note that the rate loss does not depend on the singular values of $\matr {\tilde H}$. This property allows for obtaining a convenient expression for the ergodic rate loss ${\rm E}[\chi_{\matr H}^T(R,\beta R)]$
when $\matr {\tilde H}^\dagger\matr {\tilde H}$ is unitarily invariant, i.e. $\matr {\tilde L}$ in \eqref{general} is Haar distributed. 
\begin{corollary}\label{Cor3}
Let $\matr H$ be defined as in \eqref{gmodel}. Furthermore, let $\matr {\tilde H}^\dagger\matr {\tilde H}$ have full rank almost surely and be unitarily invariant. Moreover, let $\matr X_{\beta}\triangleq \matr P_{\beta}\matr X$ where $\matr X$ is a $R\times T$ matrix with iid zero-mean complex Gaussian entries. Let $\matr {D}_{\beta}$ be the $\beta R\times \beta R$ diagonal matrix whose diagonal entries are the non-zero eigenvalues of $\Theta_\beta\triangleq{{{\matr \Lambda_{\rm R}}}}^\dagger \matr P_{\beta}^\dagger\matr P_{\beta}{{{\matr \Lambda_{\rm R}}}}$. Then, we have 
\begin{equation}
{\rm E}[\chi_{\matr H}^T(R,\beta R)]=\frac{1}{T}{\rm E}\left[\log_2\frac{\det \matr X_1^\dagger \matr {D}_{1}\matr X_1}{\det \matr X_{\beta}^\dagger \matr {D}_{\beta}\matr X_\beta}\right]. \label{EE}
\end{equation} 
\proof{See Appendix~\ref{PCor3}}
\end{corollary}
The expectation in \eqref{EE} can be simply computed by using the following result.
\begin{lemma}\cite[Lemma~2]{lozano1}
Let $\matr X$ be an $n\times m$ matrix with iid zero-mean complex Gaussian entries such that $n>m$. Let $\matr D$ be an $n\times n$ deterministic Hermitian positive-definite matrix whose $j$th eigenvalue is denoted by $\lambda_{j}$. Moreover, let $\matr \Omega$ be the $n\times n$ Vandermonde matrix with $(\matr \Omega)_{ij}=\lambda_i^{j-1}$ and $\matr \Gamma$ be the $(n-m)\times (n-m)$ principal submatrix of $\matr \Omega$. Then, we have 
\begin{equation}
{\rm E}[\ln \det \matr X^\dagger \matr D\matr X]=\frac{\det \matr \Gamma}{\det\matr \Omega} \sum_{i=1}^{m} \det \matr \Psi_{i}
\end{equation}
where $\matr \Psi_{i}$ is $m\times m$ matrix whose entries are
\begin{align}
(\matr \Psi_i)_{k,l}=&\nu_{n-m+k}\lambda_{n-m+k}^{n-m-1+l}\nonumber \\&-\sum_{d=1,q=1}^{n-m}\nu_{q}(\matr \Gamma^{-1})_{d,q}^{d-1}\lambda_{n-m+k}^{d-1}\lambda_{q}^{n-m-1+1}.
\end{align}
In this expression, $\nu_{q}=\psi(l)+\ln\lambda_{q}$ if $l=i$ else $\nu_{q}=1$ with $\psi(\cdot)$ denoting the digamma function. 
\end{lemma} 
\subsection{Further discussions based on numerical results}
As a warm up example, consider a $4\times 2$ MIMO system that is stripped off two of its four receive antennas. For full-rank channel matrices that are unitarily invariant from left Theorem~1 gives the exact high SNR limit of the ergodic loss equal to $4\chi^2(4,2)=3.37$ bit. The asymptotic loss \eqref{Loss} is $4H(2/4)=4$ bit. 
\begin{figure}
\epsfig{file=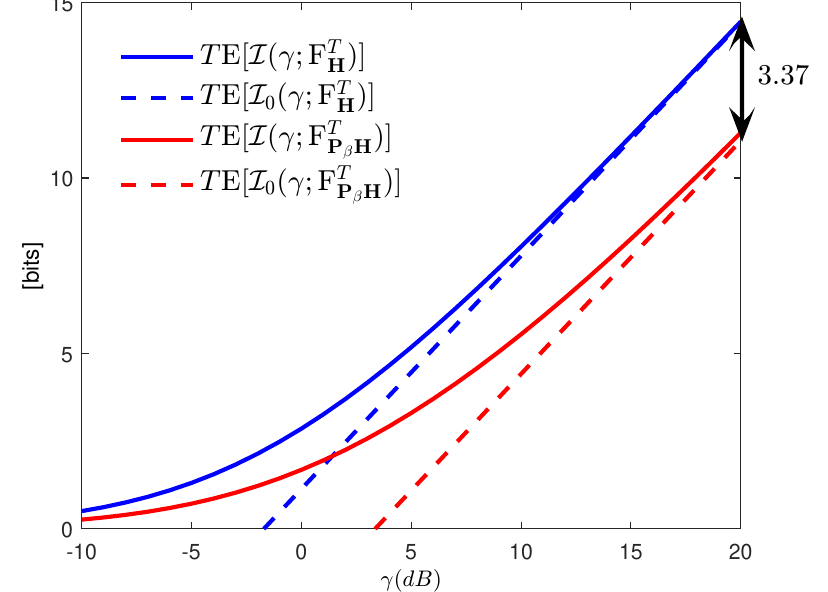,width=\columnwidth}
\caption{Ergodic mutual information (continuous lines) and ergodic multiplexing rate (dashed lines) versus the SNR of a zero-mean iid complex Gaussian MIMO channel with $T = 2$ transmit antennas and the number of receive antennas decreased from $R=4$ (blue curves) to $R=2$ (red curves).}\label{fig}
\end{figure} 
Note also that $4\chi^{2}(4,2)$ is the supremum of the mutual information loss over all SNRs. This is depicted in Figure~1 for a Gaussian channel. 

We illustrate
\begin{figure}
	\epsfig{file=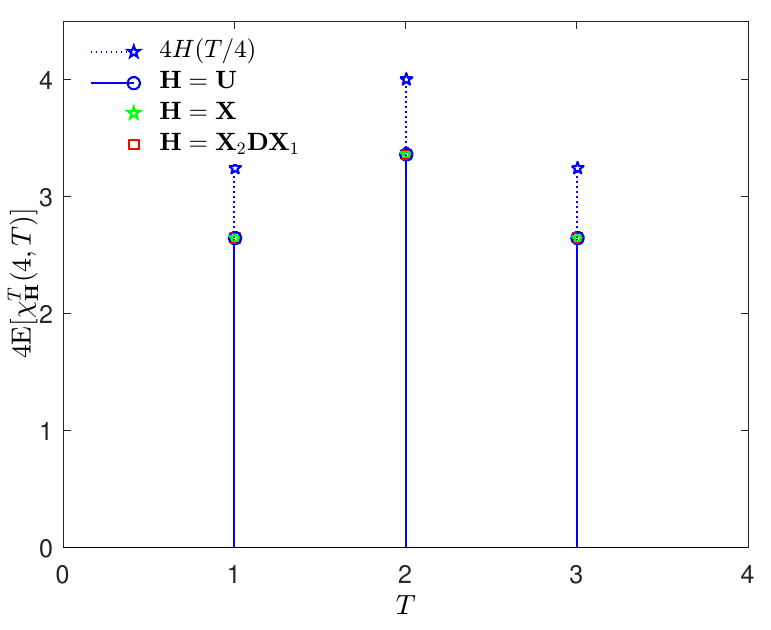,width=\columnwidth}
	\caption{The maximal ergodic mutual information loss over the SNR range: The entries of $\matr X\in \CC^{4\times T}$, $\matr X_1\in \CC^{S=4\times T}$ and $\matr X_2\in \CC^{4\times S=4}$ are zero mean iid complex Gaussian. The matrix $\matr U\in \CC^{4\times T}$ is uniformly distributed over the manifold of complex $4\times T$ matrices. The $S\times S$ matrix $\matr D$ is positive diagonal. Its diagonal entries are iid and uniformly distributed.}\label{fig2}
\end{figure}the universal rate loss and the tightness of the approximation provided by the binary entropy loss, i.e. $RH(T/R)$, already for small system dimensions. To this end we consider three different channel models that are unitarily invariant from the left: (i) the channel matrix $\matr H=\matr U {\matr \Lambda}$ where $\matr U\in \CC^{R\times T}$ is uniformly distributed over the manifold of complex $R\times T$ matrices such that $\matr U^\dagger \matr U=\matr I$ and $\matr \Lambda \in \RR^{T\times T}$ is a positive diagonal matrix that represents the power imbalance at the transmitter. This is a typical channel model in the context of massive MIMO, i.e. in the regime of $T\ll R$.  Here we point out that $\matr\Lambda$ does not affect the rate loss. Therefore for convenience we set $\matr \Lambda=\matr{\bf I}$. (ii) the channel matrix $\matr H=\matr X$ with the entries of $\matr X$ being zero-mean iid complex-valued Gaussian with finite variance; (iii) the channel matrix $\matr H=\matr X_2\matr D \matr X_1$. Here $\matr X_1\in \CC^{S\times T}$ and $\matr X_2\in \CC^{R\times S}$ represent the propagation channel from the transmit antennas to the scatterers and from the scatterers to the receive antennas respectively, while the diagonal entries in the diagonal matrix $\matr D$ are the individual scattering coefficients of the scatterers. This random matrix ensemble models the channel under the assumption of propagation via one-bounce scattering only \cite{ralfb}. To fulfill the full-rank condition we restrict to the case $S\geq T$. From Figure~2, we conclude that the binary entropy loss yields an accurate approximation even for small system dimensions. 
 
\section{Deviation from Linear Growth}
In this section we clarify the second misinterpretation underlined in Section~1. Specifically, we analyze the variation of the multiplexing rate when either the number of receive or the number transmit antennas varies while their maximum is kept fixed. 

For a channel matrix having \emph{orthogonal} columns when the number of transmit or receive antennas varies, the linear growth of mutual information is obvious. However, for a channel matrix with e.g. iid entries, a substantial crosstalk arises due to the lack of orthogonality of its columns. The effect of this crosstalk onto mutual information is non-linear in the number of antennas.

The mutual information scales approximately linearly in the minimum of the numbers of transmit and receive antennas. For a tall rectangular channel matrix that becomes wider and wider, the mutual information can only grow approximately linearly until the matrix becomes square. The same holds for a wide rectangular channel matrix growing taller and taller. Therefore, we have to distinguish between two cases: (i) the number of receive antennas is smaller than the number of transmit antennas, i.e.\ a wide channel matrix, and (ii) the converse of (i), i.e.\ a tall channel matrix. Since case (ii) can be easily treated by replacing the channel matrix with its conjugate transpose, we restrict our investigations to case (i).

The linear growth cannot continue once the channel matrix has grown square. Thus, it makes sense to constrain the matrix of reference system (\ref{mimo}) to be square; i.e. \textit{we assume that the channel matrix $\matr H$ in \eqref{mimo} is $N\times N$ i.e. $N=R=T$}.

The exact mutual information of the (rectangular) system \eqref{smodel} of size $\beta N \times N$, $\beta\leq 1$ is
\begin{equation}
N \mathcal I(\gamma;{\rm F}^N_{\matr P_\beta \matr H})\label{linear}.
\end{equation}
The mutual information \eqref{linear} scales approximately linearly with the number of receive antennas, if it is close to
\begin{equation}
\beta N \mathcal I(\gamma;{\rm F}^N_{ \matr H}).
\end{equation}
Thus, in the high SNR limit, the deviation from the linear growth normalized to $N$ (the deviation from linear growth for short) is given by
\begin{align}
 \Delta \mathcal L(\beta; {\rm F}^N_\matr H)&\triangleq\lim_{\gamma\to \infty}\mathcal I(\gamma;{\rm F}^N_{\matr P_\beta\matr H})-\beta\mathcal I(\gamma;{\rm F}^N_{\matr H})\label{nonlin}\\&=\mathcal I_{0}(\gamma;{\rm F}^N_{\matr P_\beta\matr H})-\beta\mathcal I_{0}(\gamma;{\rm F}^N_{\matr H})\label{nonlinear} 
\end{align}
where $\matr H$ is assumed to have full rank almost surely. The full-rank assumption implies $\alpha^T_{\matr H}=\alpha^T_{\matr P_{\beta}\matr H}$ which is necessary in the definition \eqref{nonlin}. Otherwise, \eqref{nonlin} is divergent.
\begin{example}
Let $\matr H$ be unitary. Then, we have 
\begin{equation}
\Delta \mathcal L(\beta; {\rm F}^N_\matr H)=0.
\end{equation}
\end{example}
 
\subsection{The large-system limit consideration}
The deviation from linear growth \eqref{nonlinear} differs from the quantity $\chi_{\matr H}^T$ defined in \eqref{rateloss} only by the factor $\beta$ scaling the second term. Unlike $\chi_{\matr H}^T$, $\Delta \mathcal L$ does depend on the singular values of channel matrix. This makes the analysis somehow intractable. On the other hand, it is well-known that asymptotic results when the numbers of antennas grow large provide very good approximations already for systems with a dozen (or even less) of antennas in practice. Thus, we can resort to the asymptotic regime in the number of antennas to study the deviation from linear growth. To that end, in this section we make use of the following underlying assumption:
\begin{assumption}
The channel matrix $\matr H$ has full rank almost surely. Furthermore, $\matr H\matr H^\dagger$ is unitarily invariant, has a uniformly bounded spectral norm, and its empirical eigenvalue
distribution converges almost surely as $N\to \infty$. Moreover, $\Delta {\mathcal I}(1; {\rm F}_{\matr H})$ is finite.
\end{assumption}
We carry out the analysis on the basis of the LED function ${\rm F}_{\matr P_{\beta}\matr H}$. Specifically, we consider
\begin{equation}
\Delta \mathcal L(\beta; {\rm F}_\matr H)=\mathcal I_{0}(\gamma;{\rm F}_{\matr P_\beta\matr H})-\beta\mathcal I_{0}(\gamma;{\rm F}_{\matr H}).
\end{equation}
When we interpret the asymptotic results in the numerical investigations we assume that
\begin{equation}
\lim_{N\to \infty} {\rm E}[\mathcal I_0(\gamma;{\rm F}^N_{\matr P_{\beta}\matr H})]=\mathcal I_0(\gamma;{\rm F}_{\matr P_{\beta}\matr H})\ ,\quad \beta\leq 1. \label{key}
\end{equation}
It is easy to show that the convergence \eqref{key} is a mild assumption for $\beta<1$: as ${\rm F}_{\matr H}$ is assumed to have a compact support, ${\rm F}_{\matr P_{\beta}\matr H}$ has a compact support too, see \cite[Corollary~1.14]{Nica0}. Note that a compactly supported probability distribution can be uniquely characterized by its moments. This fact allow us to use the machinery provided in Proposition~1 in Appendix~C. Specifically, $\sup_{N}{\rm E}[\int  x^{-1} {\rm d}\tilde{\rm F}^N_{\matr P_{\beta}\matr H}(x)]<\infty$ is sufficient for \eqref{key} to hold. Indeed this is a reasonable condition for $\beta<1$ since 
\begin{equation}
\int \frac 1 x \;{\rm d}\tilde{\rm F}_{\matr P_{\beta}\matr H}(x)\ ,\quad  0<\beta<1 \label{harmean}
\end{equation} 
is strictly increasing with $\beta$, see Remark~\ref{defh}.

\begin{example}\label{correction}
Let the entries of $\matr H$ be iid with zero mean and variance $\sigma^2/N$. Then, we have 
\begin{equation}
\Delta \mathcal L(\beta; {\rm F}_\matr H)=(\beta-1)\log_2(1-\beta) \label{iid}
\end{equation}
where by convention $0\log_20=0$. 
\proof{See Appendix \ref{Examp1}}.
\end{example} 
In other words, at high SNR the normalized mutual information of a MIMO system of sufficiently large dimensions with zero-mean iid channel entries grows approximately linearly with the minimum of the numbers of transmit and receive antennas up to 1st order and the deviation from the linear growth is close to $(\beta-1)\log_2(1-\beta)$. Figure~\ref{fig3} illustrates this behavior.
\begin{figure}
\epsfig{file=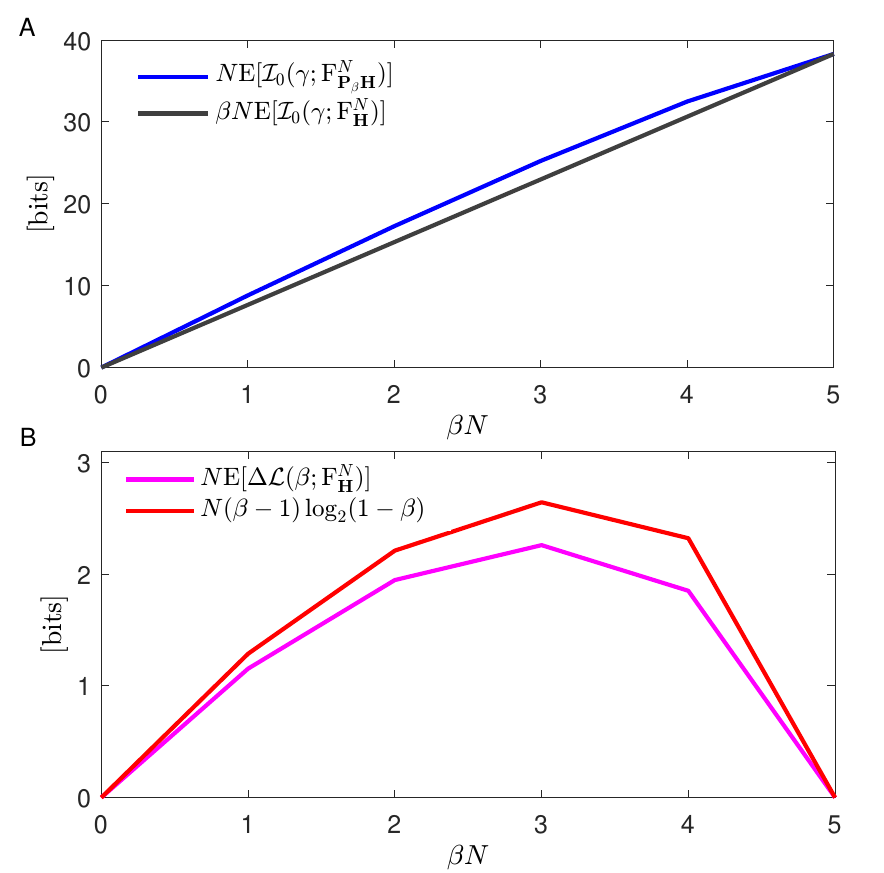,width=\columnwidth}
\caption{Ergodic multiplexing rate and corresponding linear growth (A) and (ergodic) deviation from linear growth (B) versus number of receive antennas $\beta N$. The entries of $\matr H\in \CC^{5\times 5}$ are iid Gaussian with zero mean and variance $1/5$.  The SNR is $\gamma=20~dB$.}\label{fig3}
\end{figure}

\subsection{The S-transform formulation}
The result in Example~\ref{correction} can be obtained from previous capacity results, e.g. \cite[Eq. (2.63)]{tulino}. We obtained it as a special case of the following lemma.
\begin{lemma}\label{deviation}Let $\matr H$ fulfill Assumption~1. Then, we have    
\begin{eqnarray}
\Delta \mathcal L(\beta; {\rm F}_\matr {H})= -\beta\int _{0}^{1}\log_2\frac{{\rm S}_{\matr H}(-\beta z)}{{\rm S}_{\matr H}(-z)}\;{\rm d}z.
\end{eqnarray}
\proof{See Appendix~\ref{Lemma4}.}
\end{lemma}
Alternatively, we may bypass the need for using the S-transform by invoking the following result:
\begin{remark}\label{defh}
Let $\matr H$ fulfill Assumption~1. Furthermore, let $\matr P_{t}$ be an $N$-dimensional projector with $0<t<1$.
Then, we have 
\begin{equation}
{\rm S}_{\matr H}(-t)=\int \frac{1}{x}\;{\rm d\tilde F}_{\matr P_{t}\matr H}(x)\ , \quad 0<t<1. \label{mrn}
\end{equation}
\proof{See Appendix~\ref{Lemma4}.}
\end{remark}
The result in \eqref{mrn} also provides a convenient means to calculate the deviation from linear growth in the large-system limit. The right-hand side of \eqref{mrn} is nothing but the asymptotic inverse spectral mean of the channel matrix $\matr P_{t}\matr H$.

\subsection{The universality related to the SNR range}
Note that the difference $\mathcal I(\gamma; {\rm F}_{\matr P_\beta\matr H})-\beta\mathcal I(\gamma; {\rm F}_{\matr H})$ converges to $\Delta \mathcal L(\beta; {\rm F}_\matr {H})$ as the SNR tends to infinity, see \eqref{nonlinear}. In Appendix~\ref{devsupproof} we show that this difference actually increases with SNR unless ${\rm F}_{\matr H}$ is a Dirac distribution function. Thus, we have the following universal characterization over the whole SNR range.
\begin{remark} \label{devsup}
Let $\matr H$ fulfill Assumption~1. Then, we have 
\begin{align}
\Delta \mathcal L(\beta,{\rm F}_{\matr H})= \sup_{\gamma}\left\{\mathcal I(\gamma; {\rm F}_{\matr P_\beta\matr H})-\beta\mathcal I(\gamma; {\rm F}_{\matr H})\right\}. \label{devsup1}
\end{align}
\proof{See Appendix~\ref{devsupproof}.}
\end{remark}
\subsection{The additive property}
We now draw the attention to another important property of the deviation from linear growth:
\begin{theorem}\label{add} 
	Let $\matr X$ and $\matr Y$ be independent $\CC^{N\times N}$ random matrices. Moreover, let $\matr X$ and $\matr Y$ fulfill Assumption~1.
	Then, we have
	\begin{equation}
	\Delta \mathcal L(\beta; {\rm F}_\matr {XY})=\Delta \mathcal L(\beta; {\rm F}_\matr {X})+\Delta \mathcal L(\beta; {\rm F}_\matr {Y}).
	\end{equation}
	\proof{See Appendix~\ref{The3}.}  
\end{theorem}
\begin{example}
	Consider a random matrix defined as
	\begin{equation}
	\matr H=\prod_{m=1}^{M}\matr A_m  \label{product}
	\end{equation}
	where the $N\times N$ matrices $\matr A_m$,  $m=1,\dots, M$, are independent, have iid entries with zero mean and variance $\sigma^2/N$. Then we have almost surely
	\begin{align}
	\Delta \mathcal L(\beta;{\rm F}_\matr {H})&=M\Delta \mathcal L(\beta; {\rm F}_{\matr {A}_1}) \\ 
	&= M(\beta-1)\log_2(1-\beta).
	\end{align}
\end{example}
As mentioned previously, the crosstalk due to non-orthogonal columns in $\matr H$ affects the mutual information in a way that is non-linear in the number of antennas. Thus, it causes the deviation from linear growth. Let us be more precise here and (inspired from \cite[Eq. (1)]{Litwin}) introduce the concept of \emph{crosstalk ratio}:
\begin{equation}
{\rm CT}_{\matr H}\triangleq \lim_{N\to \infty} \frac{\sum_{i=1}^{N}\sum_{j<i}{\vert\matr h_{i}^\dagger \matr h_{j}\vert^2}}{\sum_{i=1}^{N}\vert\matr h_{i}^\dagger\matr h_{i}\vert^2}.\label{CTR}
\end{equation}
Here $\matr h_{i}$ denotes the $i$th column of $\matr H$. For example, for an unitary matrix $\matr H$, we have ${\rm CT}_{\matr H}=0$. As a second example, let the entries of $\matr H$ be iid complex Gaussian with zero mean and variance $1/N$. Then, from \eqref{ACTR} we get
\begin{equation}
{\rm CT}_{\matr H}=\frac{1}{2}.
\end{equation}
We next show that the crosstalk ratio has the same additive property as the deviation from linear growth.
\begin{remark}\label{CTadd}
Let $\matr X$ and $\matr Y$ be independent $\CC^{N\times N}$ random matrices. Moreover let $\matr X$ and $\matr Y$ fulfill Assumption~1. Then we have
\begin{equation}
		{\rm CT}_{\matr X\matr Y}={\rm CT}_{\matr X}+{\rm CT}_{\matr Y}.
\end{equation}
		\proof{See Appendix~\ref{R7}.}  
\end{remark}
\section{Conclusions}\label{conc}

A variation of the number of antennas in a MIMO system affects the mutual information at asymptotically large SNR in following way: If the minimum number of antennas at transmitter and receiver side stays unaltered, the change of mutual information depends only on the system dimensions and the matrix of left (or right) singular vectors of initial channel matrix but {\em not on its singular values}. For channel matrices that are unitarily invariant from left (or right) this change of mutual information in the ergodic sense can be expressed with a simple analytic function of the system dimensions. Moreover, the large system limit of this expression involves only the binary entropy functions of the aspect ratios of two varying channel matrices -- the one before and the one after altering the number of antennas.

Mutual information grows only approximately linear with the minimum of the system dimensions even at high SNR. 
This deviation from that linear growth, i.e.\ the error of the linear approximation, does depend on the singular values of the channel matrix. It can be quantified and has the following remarkable property in the large system limit:
For certain factorizable MIMO channel matrices, the deviation is the sum of the deviations of the individual factors.

The results derived in this work for asymptoticly large SNR are least upper bounds over the whole SNR range. This gives them a universal character.

Finally, a fundamental relation between mutual information and its affine approximation (the multiplexing
rate) was unveiled. This relation can be conveniently described via the S-transform of free
probability.

\appendices

\section{Preliminaries}
\begin{lemma} \label{entropy} 
Let $p \in [0,1]$. Then, we have
\begin{eqnarray}
\int\limits_{0}^{p}\log_2\frac{1-z}{p-z}\;{\rm d}z=H(p). \label{BEF}
\end{eqnarray}
\end{lemma}
\begin{proof}
We first recast \eqref{BEF} into the equivalent identity
\begin{eqnarray}
\lim_{x\to p}\int\limits_{0}^{x}\log_2\frac{1-z}{x-z}\;{\rm d}z=H(p). \label{57}
\end{eqnarray}
To prove \eqref{57}, we first apply a variable substitution
\begin{equation}
\int\limits_{0}^{x}\log_2\frac{1-t}{x-t}\;{\rm d}t=x\int\limits_{0}^{1}\log_2\frac{x^{-1}-z}{1-z}\;{\rm d}z \label{sub}
\end{equation}
and decompose the right hand side of (\ref{sub}) as 
\begin{equation}
x\int\limits_{0}^{1}\log_2(x^{-1}-z)\;{\rm d}z -x \int\limits_{0}^{1}\log_2{(1-z)}\;{\rm d}z.
\end{equation}
Define $u\triangleq\log_2(x^{-1}-z)$ and $v=z$. Applying the integration by part rule, we obtain for the first integral:
\begin{eqnarray}
\int\limits_{0}^{1}\log_2\left(x^{-1}-z\right)\; {\rm d}z&=&\left. uv\right |_{0}^{1}-\int\limits_{0}^{1} v\;{\rm d}u\\
&=& x^{-1}H(x)-\log_2 e. \label{integral}
\end{eqnarray}
Using (\ref{integral}), we compute the second integral:
\begin{eqnarray}
\int\limits_{0}^{1}\log_2{(1-z)}\;{\rm d}z=\lim_{x\to 1}\int\limits_{0}^{1}\log_2(x^{-1}-z)\;{\rm d}z=-\log_2 e.
\end{eqnarray}
This completes the proof. 
\end{proof}

\begin{lemma}\label{suminv}\cite{Miller} Let $\matr A$ and $\matr A+\matr B$ be invertible and $\matr B$ have rank $1$. Furthermore let $g\triangleq{\rm tr}(\matr B\matr A^{-1})\neq -1$. Then, we have 
\begin{equation}
\left(\matr A+\matr B\right)^{-1}= \matr A^{-1}- \frac{1}{g+1}\matr A^{-1}\matr B \matr A^{-1}.
\end{equation}
\end{lemma}
\begin{lemma}\label{Smap}\cite[Lemma~2 \& Lemma~4]{hager} 
Let ${\rm F}$ be a probability distribution function with support in $[0,\infty)$ and $\rm S$ its S-transform. Moreover, let ${\rm F}$ be not a Dirac distribution function. Then, $\rm S$ is strictly decreasing on $(-\alpha,0)$ with $\alpha\triangleq 1-{\rm F}(0)$. In particular, we have 
\begin{align}
\lim_{z\to 0^-}{\rm S}(z)& = \left(\int x \; {\rm dF}(x)\right)^{-1} \\
\lim_{z\to -\alpha^+}{\rm S}(z)& = \int \frac{1}{x}\; {\rm dF}(x) \label{sinv}
\end{align} 
where we use the convention $1/0=\infty$ in \eqref{sinv} when ${\rm F}(0)>0$. 
\end{lemma}

\begin{theorem}\label{moller}\cite[Proposition 1]{hager} Let ${\rm F}$ be a probability distribution function with support in $(0,\infty)$ and $\rm S$ its S-transform. Then $\int\vert\log x\vert\; {\rm d F}(x)$ is finite if, and only if, $\int_{0}^{1} \vert\log  {\rm S}(-z)\vert\; {\rm d}z$ is finite. If either of these integrals is finite, 
\begin{equation}
\int \log (x)\;{\rm d F}(x)=-\int\limits_{0}^{1}\log {\rm S}(-z)\;{\rm d}z.
\end{equation}
\end{theorem}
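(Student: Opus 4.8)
The plan is to pass from the $z$-integral on the left of the claimed identity to an integral against $P_{\matr X}$ by inverting $\Psi_{\matr X}$, exactly the way the S-transform is built, and then to reduce everything to the two elementary integrals $\int_0^1\log(1-z)\,dz=\int_0^1\log z\,dz$ already evaluated in the proof of Lemma~\ref{entropy}, together with one use of Fubini's theorem. Since $P_{\matr X}$ is carried by $(0,\infty)$ we have $\alpha_{\matr X}=1$, and $\Psi_{\matr X}$ is a strictly increasing $C^1$-diffeomorphism of $(-\infty,0)$ onto $(-1,0)$: indeed $\Psi_{\matr X}'(\lambda)=\int x(1-\lambda x)^{-2}\,dP_{\matr X}(x)>0$ for $\lambda<0$, and $\Psi_{\matr X}(0^-)=0$, $\Psi_{\matr X}(-\infty)=-1$ by dominated convergence. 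Hence $\Psi_{\matr X}^{-1}$ is defined on $(-1,0)$ and $\lambda\mapsto z=-\Psi_{\matr X}(\lambda)$ is a decreasing $C^1$ bijection of $(-\infty,0)$ onto $(0,1)$; this is the change of variables I would perform in $\int_0^1\log S_{\matr X}(-z)\,dz$.

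Carrying out that substitution, with $S_{\matr X}(-z)=\tfrac{1-z}{-z}\,\Psi_{\matr X}^{-1}(-z)=\lambda\,(1+\Psi_{\matr X}(\lambda))/\Psi_{\matr X}(\lambda)$ and the splitting $\log S_{\matr X}(-z)=\log(1+\Psi_{\matr X}(\lambda))+\log(-\lambda)-\log(-\Psi_{\matr X}(\lambda))$ (all three terms real, since $1+\Psi_{\matr X}(\lambda)\in(0,1)$ and $-\lambda,-\Psi_{\matr X}(\lambda)>0$), the left side of the identity becomes
\begin{equation}
\int_{-\infty}^0\!\big[\log(1+\Psi_{\matr X}(\lambda))-\log(-\Psi_{\matr X}(\lambda))\big]\Psi_{\matr X}'(\lambda)\,d\lambda+\int_{-\infty}^0\!\log(-\lambda)\,\Psi_{\matr X}'(\lambda)\,d\lambda .
\end{equation}
In the first bracket I would substitute once more $s=\Psi_{\matr X}(\lambda)$, which collapses it to $\int_{-1}^0\log(1+s)\,ds-\int_{-1}^0\log(-s)\,ds=0$ — precisely the cancellation of the prefactor $\tfrac{1+z}{z}$ in the definition of the S-transform, using nothing beyond the integrals of Lemma~\ref{entropy}. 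It then remains to show that $J:=\int_{-\infty}^0\log(-\lambda)\,\Psi_{\matr X}'(\lambda)\,d\lambda$ equals $-\int\log x\,dP_{\matr X}(x)$.

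For $J$ the natural step is integration by parts, but a single integration over $(-\infty,0)$ produces boundary contributions that each blow up, because $\Psi_{\matr X}\to0$ at $0$ and $\Psi_{\matr X}\to-1$ at $-\infty$ while $\log(-\lambda)\to-\infty$ and $\to+\infty$ at the two ends. The device that removes this is to split $J$ at $\lambda=-1$, where $\log(-\lambda)=0$: on $(-1,0)$ integrate by parts with antiderivative $\Psi_{\matr X}(\lambda)$ (which kills the endpoint at $0$), and on $(-\infty,-1)$ with the antiderivative $\Psi_{\matr X}(\lambda)+1=\int(1-\lambda x)^{-1}\,dP_{\matr X}(x)$ (which has the same derivative and kills the endpoint at $-\infty$); then every boundary term vanishes and $J=-\int_{-1}^0\lambda^{-1}\Psi_{\matr X}(\lambda)\,d\lambda-\int_{-\infty}^{-1}\lambda^{-1}(1+\Psi_{\matr X}(\lambda))\,d\lambda$. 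Writing $\lambda^{-1}\Psi_{\matr X}(\lambda)=\int x(1-\lambda x)^{-1}\,dP_{\matr X}(x)$ and $\lambda^{-1}(1+\Psi_{\matr X}(\lambda))=\int[\lambda^{-1}+x(1-\lambda x)^{-1}]\,dP_{\matr X}(x)$, Fubini turns each $\lambda$-integral into an elementary one-variable integral; the first piece yields $-\int\log(1+x)\,dP_{\matr X}(x)$ and the second yields $-\int\log x\,dP_{\matr X}(x)+\int\log(1+x)\,dP_{\matr X}(x)$, whose sum is $-\int\log x\,dP_{\matr X}(x)$, as required.

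I expect the real work to be the quantitative, ``if and only if'' part rather than the identity itself. One must justify each use of Fubini and each of the four boundary limits, and observe that they hold precisely when $\int|\log x|\,dP_{\matr X}(x)<\infty$: the surviving term $\int\log(1+x)\,dP_{\matr X}$ is finite iff $\int_{\{x>1\}}\log x\,dP_{\matr X}<\infty$ (because $0\le\log(1+x)-\log x=\log(1+1/x)$ is bounded on $[1,\infty)$ and $\log(1+x)$ is bounded on $(0,1)$), while the vanishing of the boundary term at $\lambda\to-\infty$ of the second piece forces the complementary small-$x$ bound $\int_{\{x<1\}}(-\log x)\,dP_{\matr X}<\infty$. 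The cleanest route to the equivalence with $\int_0^1|\log S_{\matr X}(-z)|\,dz<\infty$ is to run the very same computation separately for the positive and negative parts (everything in sight is monotone), so that infiniteness of either side forces infiniteness of the other and, when both are finite, the signed identity above holds.
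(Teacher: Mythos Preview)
The paper does not contain a proof of this theorem: it is quoted verbatim as \cite[Proposition~1]{hager} and used as a black box throughout (in the proofs of Lemma~\ref{main}, Lemma~\ref{deviation}, Theorem~\ref{convergence}, Lemma~\ref{helpful}, and Theorem~\ref{sub2}). There is therefore nothing in the manuscript to compare your argument against.

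That said, your outline is a sound direct proof. The change of variables $z=-\Psi_{\matr X}(\lambda)$ is exactly the right one, and the observation that the prefactor $(1+z)/z$ contributes nothing because $\int_{-1}^0\log(1+s)\,ds=\int_{-1}^0\log(-s)\,ds$ is clean. The delicate step is, as you say, the treatment of $J$: splitting at $\lambda=-1$ and choosing the two antiderivatives $\Psi_{\matr X}$ and $1+\Psi_{\matr X}$ so that the inner boundary terms vanish trivially is a nice device, and the two outer boundary limits
\[
\lim_{\lambda\to 0^-}\log(-\lambda)\,\Psi_{\matr X}(\lambda)=0,
\qquad
\lim_{\lambda\to -\infty}\log(-\lambda)\,(1+\Psi_{\matr X}(\lambda))=0
\]
are indeed equivalent, respectively, to the finiteness of $\int_{\{x>1\}}\log x\,dP_{\matr X}$ and $\int_{\{x<1\}}(-\log x)\,dP_{\matr X}$, which together give $\int|\log x|\,dP_{\matr X}<\infty$. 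Running the same computation on the positive and negative parts (Tonelli) then yields the stated ``if and only if''. One small caution: you will want to verify the first boundary limit without assuming $\int x\,dP_{\matr X}<\infty$; write $|\Psi_{\matr X}(\lambda)|=\int \frac{|\lambda|x}{1+|\lambda|x}\,dP_{\matr X}(x)$, split at $x=1/|\lambda|$, and control the tail piece by $\int_{\{x>1/|\lambda|\}}\log x\,dP_{\matr X}/\log(1/|\lambda|)$.
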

\begin{theorem}\label{convergeceS}
For $n \in \mathbb N^{+}\triangleq\{1,2,...\}$ let ${\rm F}^n$ be probability distribution functions on $[0,\infty)$. Furthermore let $1-{\rm F}^n(0)=\alpha>0$, $\forall n\in \mathbb N^{+}$. Moreover let ${\rm S}^n$ denote the S-transform of ${\rm F}^n$. Then if ${\rm F}^n$ converges weakly to a probability distribution function ${\rm F}$ as $n \to \infty$, we have 
\begin{equation}
\lim_{n\to \infty}{\rm S}^n(z)={\rm S}(z), \quad -\alpha< z<0
\end{equation}
where $\rm S$ is the S-transform of $\rm F$.
\begin{proof}
Let us consider the function, (see \eqref{psi})
\begin{equation}
\Psi^n(z)\triangleq \int \frac{zx}{1-zx}\;{\rm dF}^n(x),\quad  -\infty<z<0. \label{Psi}
\end{equation}
For $z \in (-\infty,0)$, $z\to \frac{zx}{1-zx}$ is bounded and continuous. Hence, the weak convergence of ${\rm F}^n$ implies that 
\begin{equation}
\lim_{n\to \infty} \Psi^n(z)=\Psi(z), \quad  -\infty<z<0.
\end{equation}
Furthermore, $\Psi^n(z)$ is a strictly increasing homeomorphism of $(-\infty,0)$ onto $(-\alpha, 0)$ \cite{hager}. This implies that (see e.g. \cite[Proposition 0.1]{resnick})
\begin{equation}
\lim_{n\to \infty} (\Psi^{n})^{-1}(z)=\Psi^{-1}(z), \quad -\alpha<z<0.
\end{equation}
This completes the proof.
\end{proof}
\end{theorem} 

\begin{lemma}\label{projection}Consider a random matrix $\matr X$ and a projector $\matr P_{\beta}$. Assume that $\matr X^\dagger\matr X$ and  $\matr P_\beta^\dagger \matr P_\beta$ are asymptotically free. Then, we have  
\begin{equation}
{\rm S}_{\matr X\matr P_{\beta}^\dagger}(z)={\rm S}_{\matr X}(\beta z).
\end{equation}
\end{lemma}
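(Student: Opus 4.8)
The plan is to reduce the identity to the multiplicativity of the S-transform under free multiplicative convolution, the only real work being to keep track of the atom at the origin that the projector introduces.

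First I would identify the measure appearing on the left-hand side. The quantity $P_{\matr X\matr P_\beta^\dagger}$ is the empirical eigenvalue distribution of the $\beta N\times\beta N$ matrix $\matr P_\beta\matr X^\dagger\matr X\matr P_\beta^\dagger$. Writing this matrix as $AB$ with $A=\matr P_\beta$ and $B=\matr X^\dagger\matr X\matr P_\beta^\dagger$, the standard fact that $AB$ and $BA$ share the same nonzero eigenvalues (with multiplicities) shows that the $N\times N$ matrix $BA=\matr X^\dagger\matr X\,\matr P_\beta^\dagger\matr P_\beta$ carries exactly the eigenvalues of $\matr P_\beta\matr X^\dagger\matr X\matr P_\beta^\dagger$ together with $(1-\beta)N$ additional zeros. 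Passing to the limit, the limiting eigenvalue distribution $\nu$ of $\matr X^\dagger\matr X\,\matr P_\beta^\dagger\matr P_\beta$ therefore decomposes as $\nu=(1-\beta)\delta_0+\beta\,\mu_{\matr X\matr P_\beta^\dagger}$.

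Next I would bring in freeness. Since $\matr X^\dagger\matr X$ is asymptotically free of $\matr P_\beta^\dagger\matr P_\beta$, the measure $\nu$ is the free multiplicative convolution of $\mu_{\matr X}$ with the limiting eigenvalue distribution of $\matr P_\beta^\dagger\matr P_\beta$, so the S-transforms multiply:
\begin{equation}
S_{\nu}(z)=S_{\matr X}(z)\,S_{\matr P_\beta^\dagger\matr P_\beta}(z).
\end{equation}
The projector factor is explicit: $\matr P_\beta^\dagger\matr P_\beta$ is diagonal with $\beta N$ ones and $(1-\beta)N$ zeros, hence has limiting eigenvalue distribution $\beta\delta_1+(1-\beta)\delta_0$, for which $\Psi(z)=\beta z/(1-z)$, $\Psi^{-1}(w)=w/(w+\beta)$, and therefore $S_{\matr P_\beta^\dagger\matr P_\beta}(z)=(z+1)/(z+\beta)$.

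Finally I would quantify the effect of the atom at $0$. If $\nu=(1-\beta)\delta_0+\beta\,\rho$ for a probability measure $\rho$, then $\Psi_\nu(z)=\beta\,\Psi_\rho(z)$ — adding mass at the origin leaves $\Psi$ unchanged — so $\Psi_\nu^{-1}(w)=\Psi_\rho^{-1}(w/\beta)$, and comparing $S_\nu(w)=\tfrac{w+1}{w}\Psi_\nu^{-1}(w)$ with $S_\rho(w/\beta)=\tfrac{w+\beta}{w}\Psi_\rho^{-1}(w/\beta)$ gives $S_\rho(w/\beta)=\tfrac{w+\beta}{w+1}\,S_\nu(w)$. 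Taking $\rho=\mu_{\matr X\matr P_\beta^\dagger}$ and substituting the two previous displays yields $S_{\matr X\matr P_\beta^\dagger}(w/\beta)=\tfrac{w+\beta}{w+1}\cdot S_{\matr X}(w)\cdot\tfrac{w+1}{w+\beta}=S_{\matr X}(w)$, and the substitution $z=w/\beta$ is the claim. The one delicate point I expect is exactly this atom bookkeeping — both the rescaling of $S_\rho(w/\beta)$ against $S_\nu(w)$ and the requirement that the product formula be used on the correct negative interval (here $z\in(-\alpha_{\matr X}/\beta,0)\cap(-1,0)$, where all the inverse functions $\Psi^{-1}$ above are defined); the computation of $S_{\matr P_\beta^\dagger\matr P_\beta}$ and the $AB$/$BA$ spectral identity are routine.
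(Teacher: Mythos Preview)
Your proof is correct and follows essentially the same approach as the paper: compute the S-transform of the projector, use multiplicativity of the S-transform under free multiplicative convolution, and track the atom at zero introduced by the rectangular/projector structure. The only difference is packaging: the paper cites \cite[Lemma~4.3]{debbah} for both the atom-at-zero rescaling $S_\rho(z)=\frac{z+1}{z+1/\beta}S_\nu(\beta z)$ and the rectangular product formula, whereas you derive these identities directly from the definition of $\Psi$ and the $AB/BA$ spectral identity.
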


\begin{proof}
The S-transform of $\matr P_\beta$ reads \cite[Example 2.32]{tulino} 
\begin{eqnarray}
{\rm S}_{\matr P_{\beta}}(z)=\frac{z+1}{z+\beta}. \label{proj}
\end{eqnarray}
By invoking the identity \cite[Theorem 2.32]{tulino} and the asymptotic freeness between $\matr X^\dagger \matr X$ and $\matr P_\beta^\dagger \matr P_\beta$, we obtain 
\begin{eqnarray}
{\rm S}_{\matr X \matr P_{\beta}^\dagger}(z)&=&\frac{z+1}{z+1/\beta}{\rm S}_{\matr P_\beta}(\beta z){\rm S}_{\matr X}(\beta z) \\ \label{sams}
&=& {\rm S}_{\matr X}(\beta z).\label{same}
\end{eqnarray}
\end{proof}

\begin{remark}\label{Jacobiremark} 
Let $\matr H = \matr P_{\beta_2} \matr U \matr P_{\beta_1}^{\dagger}$ 
with $\matr U$ an $N$-dimensional Haar unitary. Then, we have almost surely
\begin{equation}
	{\rm S}_{\matr H}(z)=\frac{1+\beta_1z}{\beta_2+\beta_1z}.\label{jacobiremak}
\end{equation}
\begin{proof}
By invoking Lemma~\ref{projection} to \eqref{proj} we obtain \eqref{jacobiremak}.
\end{proof}
\end{remark}

\section{Proof of Theorem~\ref{main}}\label{Lemma4new}
\subsection{Proof of \eqref{etaalpha}}
By definition, $\mathcal I(\gamma;{\rm F}_{\matr H}^T)<\infty$. Then, from identity \cite[Eq.~(5)]{hager} we write
\begin{equation}
\mathcal{I}(\gamma;{\rm F}_{\matr H}^T)=-\int_{0}^1\log_2 (s)\partial\Psi_{\sqrt{\gamma}\matr H}^{T}(-s)\;{\rm d}s \label{start} 
\end{equation}
where $\partial\Psi_{\matr H}^{T}(\omega)\triangleq\left.\frac{{\rm d}\Psi_{\matr H}^T(x)}{{\rm d}x}\right\vert_{x=\omega}$. 

At this stage we point out two identities:
\begin{align}
\Psi_{\sqrt{\gamma}\matr H}^T(-1)+1&= \Psi_{\matr H}^T(-\gamma)+1=\eta_{\matr H}^T(\gamma) \label{uplim}
\\ \lim_{x\to 0^{-}}\Psi_{\sqrt{\gamma}\matr H}^T(x)+1&= 1.\label{lowlim}
\end{align} 
Now we apply the variable substitution $z\triangleq \Psi_{\sqrt\gamma\matr H}^T(-s)+1$ in the integral in \eqref{start}. Notice that with this substitution the upper and lower limits of this integral read \eqref{uplim} and \eqref{lowlim}, respectively. As a result \eqref{start} is recast in the form
\begin{equation}
\mathcal{I}(\gamma;{\rm F}_{\matr H}^T)=\int_{1}^{\eta_{\matr H}^T}\log_2\left(-\Psi_{\sqrt{\gamma}\matr H}^{T^{<-1>}}(z-1)\right)\;{\rm d}z\\
\end{equation}
with $\Psi_{\matr H}^{T^{<-1>}}$ denoting the inverse of $\Psi_{\matr H}^{T}$. Then, by the definition of the S-transform, see \eqref{defs}, we obtain
\begin{align}
\hspace{-0.1cm}\mathcal{I}(\gamma;{\rm F}_{\matr H}^T)&=\int^{\eta_{\matr H}^T}_1\log_2 \frac{1-z}{z} {\rm d}z  +\int^{\eta_{\matr H}^T}_1\log_2 {\rm S}_{\sqrt{\gamma}\matr H}^T(z-1)\;{\rm d}z  \\
&=H(\eta_{\matr H}^T)+\int^{\eta_{\matr H}^T}_1\log_2 {\rm S}_{\sqrt{\gamma}\matr H}^T(z-1)\;{\rm d}z  \\
&=H(\eta_{\matr H}^T)-\int_{0}^{1-\eta_{\matr H}^T}\log_2{\rm S}_{\sqrt{\gamma}\matr H}^T(-z)\;{\rm d}z. \label{lemma1f}
\end{align}
Finally, we obtain \eqref{etaalpha} by using the scaling property of the S-transform \cite[Lemma 4.2]{debbah}.
\subsection{Proof of \eqref{mult}}
Let $\tilde {\rm S}^T_{\matr H}$ be the S-transform of $\tilde {\rm F}^T_{\matr H}$. By using \cite[Theorem 2.32]{tulino} we write
\begin{equation}
\tilde {\rm S}^T_{\matr H}(z)=\frac{z+1}{z+1/\alpha^T_\matr H}{\rm S}^T_{\matr H}(\alpha^T_\matr H z), \quad -1<z<0.
\end{equation}
Note that $\tilde {\rm F}^T_{\matr H}$ is an empirical distribution function. Thus, $\log_2(x)$ is absolutely integrable over it. We use Theorem~\ref{moller} and Lemma~\ref{entropy} to complete the proof:
\begin{align}
\mathcal I_{0}&(\gamma;{\rm F}^T_{\matr H})=\alpha^T_\matr H\log_2\gamma-\alpha^T_\matr H\int\limits_{0}^{1}\log_2\tilde {\rm S}^T_{\matr H}(-x)\; {\rm d}x   \\
&=\alpha^T_\matr H\log_2\gamma -\alpha^T_\matr H\int\limits_{0}^{1}\log_2\frac{1-x}{1/\alpha^T_\matr H-x}{\rm S}^T_{\matr H}(-\alpha^T_\matr H x)\;{\rm d}x \\
&=\alpha^T_\matr H\log_2\gamma-\int\limits_{0}^{\alpha^T_\matr H}
\log_2\frac{\alpha_\matr H^T-x}{1-x}{\rm S}^T_{ \matr H}(- x)\; {\rm d}x  \\
&=\alpha^T_\matr H\log_2\gamma+H(\alpha^T_\matr H)-\int\limits_{0}^{\alpha^T_\matr H}\log_2{\rm S}^T_{\matr H}(-x)\; {\rm d}x.
\end{align}

\section{On the convergence of Mutual information and Multiplexing Rate}
In this section we provide some sufficient conditions that guarantee the convergence of the mutual information \eqref{mut} and multiplexing rate (see \eqref{dev}) in the large system limit.
\begin{proposition}\label{convergence}
	As $R,T\to \infty$ with the ratio $\phi\triangleq T/R$ fixed let $\matr H^\dagger \matr H$ have a LED ${\rm F}_{\matr H}$. Furthermore, let 
	\begin{equation}
	\sup_{T}\int x \; {\rm d F}^T_{\matr H}(x)<\infty \quad \text{a.s.}. \label{conI}
	\end{equation}
	Then we have almost surely
	\begin{eqnarray}
	\lim_{T\to \infty}\mathcal I(\gamma;{\rm F}^T_\matr H)=\mathcal I(\gamma;{\rm F}_\matr H).
	\end{eqnarray}
	Moreover if in addition
	\begin{equation}
	\sup_{T}\int \frac 1 x \; {\rm d}\tilde{\rm F}^T_{\matr H}(x)<\infty \quad \text{a.s.}  \label{inmean}
	\end{equation}
	we have almost surely
	\begin{equation}
	\lim_{T\to \infty} \mathcal I_0(\gamma;{\rm F}^T_{\matr H})=\mathcal I_0(\gamma;{\rm F}_{\matr H}). \label{careful1}
	\end{equation}  
\end{proposition}

Condition~\eqref{conI} is reasonable in practice. Otherwise the power amplification per dimension of the MIMO system explodes as its dimensions grow to infinity. One can show that for rectangular and unitarily invariant channel matrices, the condition \eqref{inmean} is reasonable too due to the strict decreasing property of the function of $\beta$ in \eqref{harmean}. However, it might not hold when the channel matrix is square. As an example, consider a channel matrix $\matr H$ whose entries are iid with zero mean and variance $\sigma^2/T$. Then, condition~\eqref{inmean} holds if $\phi\neq 1$, but is violated if $\phi = 1$. Indeed the latter case turns out critical for the ``$\log \det$" convergence of the zero-mean iid matrix ensemble, e.g. see \cite{Dag,tulino}. Nevertheless, both \cite[Proposition~2.2]{Tao} and numerical evidence lead us to conjecture that \eqref{careful1} holds when $\phi=1$ as well. Thus, we conclude that the asymptotic convergence of the multiplexing rate, i.e. \eqref{careful1}, is a mild assumption in practice.
\subsection*{Proof of Proposition~1}
For the sake of readability of the proof, whenever we use the limit operator indicating that  $T$ tends to infinity, we implicitly assume that the ratio $\phi=T/R$ is fixed.    

For convenience we define 
\begin{equation}
\matr Y\triangleq\matr {\bf I}+\gamma \matr H^\dagger \matr H.
\end{equation}
By Theorem~\ref{moller} we have 
\begin{equation}
\mathcal I(\gamma;{\rm F}^T_\matr H)=-\int\limits_{-1}^{0}\log_2{\rm S}^T_{\sqrt{\matr Y}}(z)\; {\rm d}z.
\end{equation}
The function ${\rm S}^T_{\sqrt{\matr Y}}$ is strictly decreasing on $(-1,0)$ if, and only if, ${\rm F}^T_{\matr H}$ is not a Dirac distribution function, see Lemma~\ref{Smap}. If ${\rm F}_{\matr H}^T$ is a Dirac distribution function then ${\rm S}^T_{\sqrt{\matr Y}}$ is a constant function. Without loss of generality, we can assume that ${\rm F}^T_{\matr H}$ is not a Dirac distribution function. Then, by invoking Lemma~\ref{Smap} again we have
\begin{equation}
\left(\frac{1}{T}{\rm tr}(\matr Y)\right)^{-1}<{\rm S}^T_{\sqrt{\matr Y}}(z)<\frac{1}{T}{\rm tr}(\matr Y^{-1}), \quad-1<z<0. \label{less}
\end{equation}
For convenience we define the random variable 
\begin{equation}
M\triangleq\sup_{T}\int x \; {\rm d F}^T_{\matr H}(x)\quad \text{s.t. }  \phi=\frac{T}{R} \label{ubound}.
\end{equation}
Since the upper bound in \eqref{less} is smaller than one we have 
\begin{align}
\vert\log_2 {\rm S}^T_{\sqrt{\matr Y}}(z)\vert&=-\log_2 {\rm S}^T_{\sqrt{\matr Y}}(z),\quad-1<z<0\\
&<\log_2\frac{1}{T}{\rm tr}(\matr Y)\\
&\leq\log_2(1+\gamma M) \label{enough}.
\end{align}
Because of \eqref{enough}, we can apply Lebesgue's dominated convergence theorem \cite[Theorem~10.21]{browder}:
\begin{eqnarray}
\lim_{T\to \infty}\mathcal I(\gamma;{\rm F}^T_\matr H)=-\int\limits_{-1}^{0}\lim_{T\to \infty}\log_2{\rm S}^T_{\sqrt{\matr Y}}(z)\; {\rm d}z. \label{slim}
\end{eqnarray}
By invoking Theorem~\ref{convergeceS} we complete the proof of \eqref{conI}:
\begin{align}
\lim_{T\to \infty}\log_2{\rm S}^T_{\sqrt{\matr Y}}(z)&=\log_2\lim_{T\to \infty}{\rm S}^T_{\sqrt{\matr Y}}(z)\\ &
=\log_2{\rm S}_{\sqrt{\matr Y}}(z)\label{Slimit}.
\end{align}

To prove \eqref{careful1}, we use the same arguments for $\mathcal I_0(\gamma,{\rm F}^T_\matr H)$ as for $\mathcal I(\gamma,{\rm F}^T_\matr H)$. In particular, by invoking Lemma~\ref{Smap} again we can write
\begin{equation}
\left(\int x\; {\rm d}{\rm \tilde F}^T_{\matr H}(x)\right)^{-1} <{\rm \tilde S}^T_{\matr H}(z)<\int \frac{1}{x} \; {\rm d}{\rm \tilde F}_{\matr H}^{T}(x),\quad  -1<z<0   \label{less1}
\end{equation}
with ${\rm \tilde S}^T_{\matr H}$ denoting the S-transform of ${\rm \tilde F}_{\matr H}^T$. Unlike \eqref{less}, the right-most integral is not bounded in general, so we need the additional assumption \eqref{inmean}. This completes the proof.

\section{Proof of Example~\ref{Example1}}\label{PExample1}
With a convenient re-parameterization of \cite[Eq. (19)]{ralfa} we write
\begin{equation}
{\rm S}_{\matr H}(z)=\prod_{n=1}^{N}\frac{\rho_n}{z+\rho_n}. \label{Siid}
\end{equation}
From Theorem~\ref{main} we have 
\begin{align}
\mathcal I(\gamma;{\rm F}_{\matr H}) &=H(\eta_{\matr H})+(1-\eta_{\matr H})\log_2\gamma \nonumber \\ &+\sum_{n=1}^N\int_{0}^{1-\eta_{\matr H}}\log_2(1-\frac{z}{\rho_n})\;{\rm d}z\label{int}.
\end{align}
We can write the integral terms in \eqref{int} as
\begin{align}
&\int_{0}^{1-\eta_{\matr H}}\log_2(1-\frac {z}{\rho_n})\;{\rm d}z=\nonumber \\& \log_2\frac{1-\eta_{\matr H}}{\rho_n} +\int_{0}^{1}\log_2(\frac{\rho_n}{1-\eta_{\matr H}}-z)\;{\rm d}z \label{int2}
\end{align}
for $n\in [1,N]$. By invoking the result in \eqref{integral} we obtain \eqref{iidmut}. 

From the linearity property of the Lebesgue integral, it is easy to show that $\int_{0}^{1} \vert \log_2 {\rm \tilde S}_{\matr H}(-z){\rm d}z\vert$ is finite, which implies that $\int \vert \log(x)\vert {\rm d\tilde F}_{\matr H}(x)$ is finite too, due to Theorem~\ref{moller}. Thus, the multiplexing rate is obtained by replacing the term $(1- \eta_{\matr H})$ in \eqref{iidmut} with $\alpha_{\matr H}$ (due to Theorem~\ref{main}). This leads to \eqref{iidmultip}. Finally, we note that if $\alpha_{\matr H}<1$ the S-transform ${\rm S}_{\matr H}(z)$ diverges as $z\to (-\alpha_{\matr H})$, see Lemma~\ref{Smap}. Thus, from \eqref{Siid} the unique solution of $\alpha_{\matr H}$ is $\alpha_{\matr H}=\min(1, \rho_1,\rho_2,\dots,\rho_N)$.

\section{Proof of Example~\ref{Example2}}\label{PExample2}
Recall \eqref{jacobiremak}:
\begin{equation}
{\rm S}_{\matr H}(z)=\frac{1+\beta_1z}{\beta_2+\beta_1z}. \label{Sjac}
\end{equation}
Moreover, notice that $\alpha_{\matr H}=1-{\rm F}_{\matr H}(0)=\min(1, \beta_2/\beta_1)$. For convenience let $a\triangleq 1-\eta_{\matr H}(\gamma)<\alpha_{\matr H}$. Then, we have
\begin{align}
&\int_{0}^{a} \log_2{\rm S}_{\matr H}(-z)\;{\rm d}z= a\int_{0}^{1}\log_2\frac{1-\beta_1at}{\beta_2-\beta_1at}\;{\rm d}t \\
&= \frac{H(\beta_1 a)}{\beta_1}-\frac{\beta_2}{\beta_1}H\left( \frac{\beta_1}{\beta_2}a\right)\label{116}
\end{align}
where the result \eqref{116} follows from the identity (\ref{integral}). We obtain \eqref{jacobimut} from \eqref{etaalpha} with \eqref{116} inserted in \eqref{goodform}. Moreover, by the definition of the S-transform we have
\begin{equation}
\beta_1(1-z)\Psi_{\matr H}^2(z)+(1-(\beta_1+\beta_2)z)\Psi_{\matr H}(z)-\beta_2z=0.\label{quadratic}
\end{equation}
Note that $1+\Psi_{\matr H}(-\gamma)=\eta_{\matr H}(\gamma)$. Thus, \eqref{quadratic} has two solutions for $\eta_{\matr H}(\gamma)$. Only one fulfills the properties of $\eta_{\matr H}(\gamma)$ in \cite[pp.~41]{tulino}. Specifically, from the property $\eta_{\matr H}(\gamma)\to 1$ as $\gamma\to 0$ we conclude that  \eqref{etajacobi} is this solution. Finally it is also easy to show that $\int_{0}^{1} \vert \log_2 {\rm \tilde S}_{\matr H}(-z)\vert\;{\rm d}z$ is finite in this case. This implies that $\int \vert \log(x)\vert {\rm d\tilde F}_{\matr H}(x)$ is finite too. Thus, the multiplexing rate is obtained by replacing the term $(1- \eta_{\matr H})$ in \eqref{jacobimut} with $\alpha_{\matr H}$, which leads to \eqref{jocobimultip}. 

\section{Proof of Remark~\ref{SupI}}\label{PSupI}
We first point out the relationship \cite{tulino}
\begin{align}
\frac{{\rm d}\{\mathcal I(\gamma;{\rm F}^T_{\matr H})-\mathcal I(\gamma;{\rm F}^T_{\matr P_{\beta}\matr H})\}}{{\rm d}\gamma}=\frac{\eta^T_{\matr P_\beta \matr H}(\gamma)-\eta^T_{\matr H}(\gamma)}{\gamma \ln 2}. 
\end{align}
Hence, to prove Remark~\ref{SupI} we simply need to show that
\begin{equation}
{\rm tr}\left\{(\matr {\rm I}+\gamma \matr H^\dagger \matr P_\beta^\dagger \matr P_{\beta}\matr H)^{-1}-(\matr {\rm I}+\gamma \matr H^\dagger \matr H)^{-1}\right\} \geq 0 \label{supprof}
\end{equation}
where the equality holds when $\beta=1$. To prove \eqref{supprof} it is sufficient to consider the removal of a single receive antenna, i.e. 
$\beta=(R-1)/R$. It is immediate that
\begin{equation}
\matr H^\dagger\matr H=\matr H^\dagger \matr P_\beta^\dagger \matr P_{\beta}\matr H+ \matr h_{R}^\dagger\matr h_R  
\end{equation}
with $\matr h_R\in \CC^{1\times T}$ representing the $R$th row of $\matr H$. Then \eqref{supprof} follows directly from Lemma~\ref{suminv} in Appendix~A.

\section{Proof of Remark~\ref{Cap}}\label{PCap}
We decompose the capacity expression in \eqref{ccap} as
\begin{equation}
{\mathcal C}(\gamma, {\rm F}_{\matr P_{\beta}\matr H}^{T})= \mathcal I_0(\gamma;{\rm F}_{\matr P_{\beta}\matr H\sqrt{\matr Q^*}}^T)+ \Delta\mathcal I(\gamma;{\rm F}_{\matr P_{\beta}\matr H\sqrt{\matr Q^*}}^T) 
\end{equation}
with $\matr Q^{\star}$ denoting the capacity achieving covariance matrix. We define 
\begin{equation}
 {\mathcal C}_0(\gamma, {\rm F}_{\matr P_{\beta}\matr H}^{T})\triangleq \operatorname*{\max}_{\substack{
 		\matr Q\geq 0\\
 		{\rm{tr}}(\matr Q)=T}}\mathcal I_0(\gamma;{\rm F}_{\matr P_{\beta}\matr H\sqrt{\matr Q}}^T). \label{C0}
\end{equation}
In particular, by the definitions in \eqref{ccap} and \eqref{C0} we have ${\mathcal C}_0(\gamma, {\rm F}_{\matr P_{\beta}\matr H}^{T})\geq \mathcal I_0(\gamma;{\rm F}_{\matr P_{\beta}\matr H\sqrt{\matr Q^*}}^T)$. Hence, we have
\begin{equation}
\lim_{\gamma \to \infty} \mathcal C_{0}(\gamma; {\rm F}_{\matr P_{\beta}\matr H}^T)-\mathcal C(\gamma; {\rm F}_{\matr P_{\beta}\matr H}^T)\geq 0 \label{capdif}.
\end{equation}

Since $\matr H^\dagger \matr P_\beta^\dagger \matr P_\beta \matr H$ has almost surely full rank, we have $\alpha^T_{\matr P_{\beta}\matr H\sqrt{\matr Q}}=\alpha^T_{\sqrt{\matr Q}}$ and thereby
\begin{equation}
\mathcal I_0(\gamma;{\rm F}_{\matr P_{\beta}\matr H\sqrt{\matr Q}}^T)= \alpha^T_{\sqrt{\matr Q}}\log_2 \gamma +  \alpha^T_{\sqrt{\matr Q}}\int \log_2 x~ {\rm d{\tilde F}}^T_{\matr P_{\beta}\matr H\sqrt{\matr Q}}(x). \label{maxI0}
\end{equation}
For a sufficiently large SNR a full-rank matrix $\matr Q$ maximizes \eqref{maxI0}. Therefore, to prove the result we can assume without loss of generality that $\matr Q$ has full rank. Doing so, we have 
\begin{align}
\mathcal I_0(\gamma;{\rm F}_{\matr P_{\beta}\matr H\sqrt{\matr Q}}^T)=&\log_2\gamma +\frac{1}{T}\log_2\det{\matr H^\dagger \matr P_\beta^\dagger \matr P_\beta \matr H}\nonumber \\&+\frac{1}{T}\log_2\det{\matr Q}.\label{opt}
\end{align}
Due to the constraint $\text{tr}(\matr Q)=T$, the identity operator maximizes (\ref{opt}). Hence, from \eqref{capdif} we have
\begin{equation}
 \lim_{\gamma \to \infty} \mathcal I_{0}(\gamma; {\rm F}_{\matr P_{\beta}\matr H}^T)-\mathcal C(\gamma; {\rm F}_{\matr P_{\beta}\matr H}^T)\geq 0 \label{capdiff}.
\end{equation}
On the other hand we have
\begin{equation}
\mathcal I_{0}(\gamma;{\rm F}_{\matr P_\beta\matr H}^T)< \mathcal I (\gamma;{\rm F}_{\matr P_\beta\matr H}^T)\leq \mathcal C(\gamma; {\rm F}_{\matr P_{\beta}\matr H}^T).
\end{equation}
Thus \eqref{capdiff} must be zero. This completes the proof.

\section{Proof of Theorem~\ref{Theorem2}}\label{PTheorem2}
We prove \eqref{general} which is a generalization of Theorem~\ref{Theorem2}. We make use of \eqref{gmodel} to write
\begin{equation}
\det \matr H^\dagger\matr P_\beta^\dagger \matr P_{\beta} \matr H = \det \matr \Lambda_{\rm T}^\dagger \matr \Lambda_{\rm T}\det \matr {\tilde H}^\dagger\matr \Theta_\beta \matr {\tilde H}
\end{equation}
where $\Theta_\beta\triangleq{{{\matr \Lambda_{\rm R}}}}^\dagger \matr P_{\beta}^\dagger\matr P_{\beta}{{{\matr \Lambda_{\rm R}}}}$ for $\beta\leq 1$. Hence, from \eqref{ratef} the rate loss reads as
\begin{equation}
\chi_{\matr H}^T(R,\beta R)=\frac{1}{T}\log_2 \frac{\det \matr {\tilde H}^\dagger\matr \Theta_1 \matr {\tilde H}}{\det \matr {\tilde H}^\dagger\matr \Theta_\beta \matr {\tilde H}}.\label{ratefg}
\end{equation} 
To simplify this expression, we consider the singular value decomposition of $\matr{\tilde H}$
\begin{equation}
\matr {\tilde H}=\matr{\tilde L} \matr [\matr \Sigma \vert \matr 0]^\dagger \matr {\tilde R} \label{singular} 
\end{equation}
where $\matr {\tilde L}$ and $\matr {\tilde R}$ are respectively $R\times R$ and $T\times T$ unitary matrices, $\matr \Sigma$ is a $T\times T$ positive diagonal matrix and $\matr 0$ is a $(R-T)\times T$ zero matrix. Remark that we can actually write \eqref{singular} as
\begin{align}
\matr {\tilde H} =&  \matr {\tilde L} \matr P_{\phi}^\dagger\matr \Sigma \matr {\tilde R}. \label{ar1}
\end{align}
For notational compactness, let us define $\matr Z_{\beta}\triangleq\matr P_{\phi}\matr {\tilde L}^\dagger\matr \Theta_\beta \matr {\tilde L}\matr P_{\phi}^\dagger$ and $\matr A\triangleq \matr \Sigma \matr {\tilde R}$. Thereby, we can write $\matr {\tilde H}^\dagger\matr \Theta_\beta \matr {\tilde H}= \matr A^\dagger \matr Z_{\beta}\matr A$. Note that $\matr A^\dagger \matr Z_\beta \matr A$  and $\matr Z_\beta \matr A\matr A^\dagger$ have the same eigenvalues. Thus, we have 
\begin{equation}
\det \matr {\tilde H}^\dagger\matr \Theta_\beta \matr {\tilde H} =  \det\matr\Sigma^2\det \matr Z_{\beta}. \label{KEY}
\end{equation}
{\new We complete the derivation of \eqref{general} by plugging \eqref{KEY} in \eqref{ratefg}:}
\begin{align}
\chi_{\matr H}^T(R,\beta R)&=\frac{1}{T}\log_2\frac{\det \matr Z_{1}}{\det\matr Z_\beta}\label{finalres}.
\end{align}
Note also that $\matr Z_{1}={\bf I}$ for $\matr \Lambda_{\rm R}= {\bf I}$. This completes the proof of Theorem~\ref{Theorem2}.

\section{Proof of Corollary~\ref{Theorem3}}\label{PTheorem3}
We first show that provided $\matr H^\dagger \matr H$ is unitarily invariant, when $\matr H^\dagger \matr H$ has full rank almost surely, so does $\matr {H}^\dagger\matr P_{\beta}^\dagger\matr P_{\beta} \matr {H}$ too for $\phi\leq \beta$: From \eqref{KEY}
we have 
\begin{equation}
\det \matr {H}^\dagger\matr P_{\beta}^\dagger\matr P_{\beta} \matr {H} = \det\matr \Sigma ^2\det \matr P_{\phi}\matr L^\dagger \matr P_{\beta}^\dagger \matr P_{\beta} \matr L \matr P_{\phi}^\dagger\label{detH}
\end{equation}
where $\matr \Sigma$ is a $T\times T$ diagonal matrix whose diagonal entries are the positive singular values of $\matr H$. By the unitary invariance assumption, $\matr P_{\phi}\matr L^\dagger \matr P_{\beta}^\dagger \matr P_{\beta} \matr L \matr P_{\phi}^\dagger$ is a Jacobi matrix ensemble with a positive determinant for $\phi\leq \beta$\cite{Alain}. Thereby, \eqref{detH} is positive.   

Given $x\sim {\mathcal Be}(a,b)$ we have ${\rm E}[\ln x]=\psi(a)-\psi(a+b)$ where $\psi (\cdot)$ denotes the digamma function. For natural arguments, the digamma function can be expressed as 
\begin{equation}
\psi(n)= \psi (1)+\sum_{l=1}^{n-1}\frac{1}{l}.
\end{equation}
Hence, from \eqref{Betacar} we can write the ergodic rate loss as
\begin{align}
{\rm E}[\chi_{\matr H}^T(R,\beta R)]&= -\frac{1}{T\ln 2}\sum_{t=1}^{T}{\rm E}[\ln \rho_t]\\
&=\frac{1}{T\ln 2}\sum_{t=1}^{T}[\psi(R+1-t)-\psi(\beta R+1-t)]\\
&=\frac{1}{T\ln 2}\sum_{t=1}^{T}\left[\sum_{r=1}^{R-t}\frac{1}{r}-\sum_{r=1}^{\beta R-t} \frac{1}{r}\right]  \label{mstep}\\
&=\frac{1}{T\ln 2}\sum_{t=1}^{T}\sum_{r=\beta R-t+1}^{R-t}\frac{1}{r}.
\end{align}
This completes the derivation of \eqref{ergodic}. 

As regards to derivation of \eqref{Loss}, we first note the almost sure convergence of the limit \cite[Theorem 3.6 and Eq. (4.23)]{Alain}
\begin{equation}
\lim_{T\to \infty}\frac{1}{T}\log_2\det\matr P_{\phi}\matr L^\dagger \matr P_{\beta}^\dagger \matr P_{\beta} \matr L \matr P_{\phi}^\dagger =\int \log_2 (x) \; {\rm dF}_{\matr P_{\beta} \matr U \matr P_{\phi}^\dagger}(x).
\end{equation}
Using \eqref{jocobimultip} we express this limit in terms of binary entropy function:
\begin{equation}
\int \log_2 (x) \;{\rm dF}_{\matr P_{\beta} \matr U \matr P_{\phi}^\dagger}(x)= -\frac{1}{\phi}H(\phi)+\frac{\beta}{\phi}H\left(\frac{\phi}{\beta} \right).
\end{equation}
This completes the derivation of \eqref{Loss}.

\section{Proof of Remark~\ref{symmetry}}\label{Psymmetry}
It is sufficient to prove the result for $(R-T)<T$. For the sake of notational compactness, we define $h_p\triangleq\sum_{l=1}^p \frac{1}{l}$ and $g(R,T)\triangleq\ln(2)T\chi^T(R,T)$. Then, from \eqref{mstep} we write
\begin{align}
g(R,T)=&\sum_{t=1}^{T}h_{R-t}-\sum_{t=1}^{T}h_{T-t}\\
=&\sum_{t=1}^{R-T}h_{R-t}+\sum_{t=(R-T)+1}^{T}h_{R-t}\nonumber \\
&-\sum_{t=1}^{2T-R}h_{T-t}-\sum_{t=2T-R+1}^{T}h_{T-t}.
\end{align}
Notice that
\begin{align}
\sum_{t=(R-T)+1}^{T}h_{R-t}&=\sum_{t=1}^{2T-R}h_{T-t} \\
\sum_{t=2T-R+1}^{T}h_{T-t}&=\sum_{t=1}^{R-T}h_{(R-T)-t}.
\end{align}
Thereby, we get
\begin{align}
g(R,T)&=\sum_{t=1}^{R-T}h_{R-t}-\sum_{t=1}^{R-T}h_{(R-T)-t}\\
&=g(R,R-T).
\end{align}
This completes the proof.

\section{Proof of Corollary~\ref{Cor3}}\label{PCor3}
Following the same line of argumentation as used to obtain \eqref{finalres} we get
\begin{align}
\log_2\frac{\det \matr {X}^\dagger \Theta_1 \matr {X}}{\det\matr {X}^\dagger \Theta_\beta \matr {X}}=\log_2\frac{\det \matr P_{\phi}\matr {U}^\dagger \Theta_1 \matr {U} \matr P_{\phi}^\dagger}{\det \matr P_{\phi}\matr {U}^\dagger \Theta_\beta \matr {U} \matr P_{\phi}^\dagger}\label{generalG}
\end{align}
where $\matr U$ is a $R\times R$ unitary matrix whose columns are the left singular vectors of the Gaussian random matrix $\matr X$. Since $\matr X^\dagger\matr X$ is unitarily invariant $\matr U$ is Haar distributed. The matrix of the left singular vectors of $\matr {\tilde H}$, i.e. $\matr {\tilde L}$, is Haar distributed too as $\matr {\tilde H}^\dagger \matr {\tilde H}$ is unitarily invariant. Thereby, from \eqref{general} and \eqref{generalG} we have
\begin{equation}
\chi^{T}_{\matr H}(R, \beta R)\sim \frac{1}{T}\log_2\frac{\det \matr {X}^\dagger \Theta_1 \matr {X}}{\det\matr {X}^\dagger \Theta_\beta \matr {X}}.
\end{equation}
Note that the rank of $\matr \Theta_{\beta}$ is $\beta R$. Thus, we can consider the eigenvalue decomposition 
\begin{equation}
\matr \Theta_{\beta}=\matr U_{\beta}^\dagger \matr P_{\beta}^\dagger\matr D_{\beta}\matr P_{\beta}\matr U_{\beta}
\end{equation}
where $\matr U_{\beta}$ is a $R\times R$ unitary matrix. Since $\matr X\sim \matr U_{\beta}\matr X$, we have
\begin{equation}
\matr {X}^\dagger \Theta_\beta \matr {X}\sim \matr {X}^\dagger\matr P_{\beta}^\dagger\matr D_{\beta}\matr P_{\beta}\matr {X}.
\end{equation}
Thereby, we have
\begin{equation}
{\rm E}[\chi^{T}_{\matr H}(R, \beta R)]=\frac{1}{T}{\rm E}\left[\log_2 \frac{\det \matr {X}^\dagger\matr D_{1}\matr {X}}{\det\matr {X}^\dagger\matr P_{\beta}^\dagger\matr D_{\beta}\matr P_{\beta}\matr {X}}\right]
\end{equation}
which completes the proof.

\section{Solution of Example \ref{correction}}\label{Examp1}
Note that we do not assume that $\matr H$ has Gaussian entries. However it is well known that for any distribution of the entries of $\matr H$, the distribution function ${\rm F}_{\matr P_{\beta}\matr H}^N$ converges weakly and almost surely to the Mar\u cenko-Pastur law. In other words, we get the same asymtotic results regardless of whether we restrict the entries of $\matr H$ to Gaussian or not. Thus, without loss of generality we can assume that the entries of $\matr H$ are Gaussian, so that $\matr H\matr H^\dagger$ is unitarily invariant. Doing so we have ${\rm S}_{\matr H}(z)=(1+z)^{-1}$\cite{tulino}. Then, we immediately obtain \eqref{iid} from \eqref{jocobimultip}. 

\section{Proof of Lemma~\ref{deviation}}\label{Lemma4}
We have ${\alpha}_{\matr P_{\beta}\matr H}=\beta$. Thus 
\begin{equation}
\mathcal{I}_{0}(\gamma;{\rm F}_{\matr P_{\beta}\matr H})=\beta\mathcal{I}_{0}(\gamma;{\rm \tilde F}_{\matr P_{\beta}\matr H})=\beta\mathcal{I}_{0}(\gamma;{\rm F}_{\matr H^\dagger\matr P_{\beta}^\dagger}).\label{ref} 
\end{equation}
Furthermore, with Lemma~\ref{projection} we have
\begin{equation}
{\rm S}_{\matr H^\dagger\matr P_\beta^\dagger}(z)={\rm S}_{\matr H^\dagger}(\beta z)={\rm S}_{\matr H}(\beta z). \label{nec}
\end{equation}
In the sequel we first show that 
\begin{equation}
\int\limits_{0}^{1}\left\vert \log_2\tilde {\rm S}_{\matr P_{\beta}\matr H}(-z)\right\vert\; {\rm d}z=\int\limits_{0}^{1}\left\vert \log_2{\rm S}_{\matr H}(-\beta z)\right\vert \;{\rm d}z<\infty \label{Sleb}
\end{equation}
where $\tilde {\rm S}_{\matr P_{\beta}\matr H}$ is the S-transform of $\tilde {\rm F}_{\matr P_{\beta}\matr H}$. To do so, it is sufficient to show that $\int\limits_{0}^{1}\left\vert \log_2{\rm S}_{\matr H}(-z)\right\vert\; {\rm d}z<\infty$. Since ${\rm F}_{\matr H}$ has a compact support, $\mathcal{I}(\gamma;{\rm F}_{\matr H})$ is finite. Now we show that $\log{x}$ is absolutely integrable over ${\rm F}_{\matr H}$ if, and only if,  $\mathcal{I}(1;{\rm F}_{\matr H})$ and $\Delta\mathcal{I}(1;{\rm F}_{\matr H})$ are finite \cite{hager}:
\begin{align}
\int\limits_{0}^{\infty} \left\vert\log_2(x)\right \vert\; {\rm d}{\rm  F}_{\matr H}(x) =\int\limits_{0}^{1}\log_2\left( \frac{1}{x} \right)\;{\rm d}{\rm F}_{\matr H}(x)\nonumber \\+\int\limits_{1}^{\infty}\log_2(x) \;{\rm d}{\rm F}_{\matr H}(x). \label{L9} 
\end{align}
Thus, we have 
\begin{align}
\int\limits_{0}^{1}\log_2\left(\frac{1}{x} \right)\;{\rm d}{\rm F}_{\matr H}(x)<\infty& \iff\Delta\mathcal I(1;{\rm F}_{\matr H})<\infty, \label{10a} \\
\int\limits_{1}^{\infty}\log_2(x)\; {\rm d}{\rm F}_{\matr H}(x)<\infty &\iff \mathcal I(1;{\rm F}_{\matr H})<\infty.\label{10b}
\end{align}
with $\iff$ implying `'if, and only if". Hence \eqref{L9} is finite. Due to Theorem~\ref{moller} this implies that $\int\limits_{0}^{1}\left\vert \log_2{\rm S}_{\matr H}(-z)\right\vert \;{\rm d}z$ is finite too.  

By invoking Theorem~\ref{moller}, (\ref{ref}) and (\ref{nec}) we obtain
\begin{eqnarray}
\mathcal{I}_{0}(\gamma;{\rm F}_{\matr P_{\beta}\matr H})
=\beta\log_2\gamma-\beta\int _{0}^{1}\log_2{\rm S}_{\matr H}(-\beta z)\;{\rm d}z. \label{harmon1}
\end{eqnarray}
Due to \eqref{Sleb}, it follows from the linearity property of the Lebesgue integral that
\begin{align}
\Delta \mathcal L(\beta;{\rm F}_{\matr H})&= \mathcal I_{0}(\gamma;{\rm F}_{\matr P_\beta\matr H})-\beta\mathcal I_{0}(\gamma;{\rm F}_{\matr H})\\  &= -\beta\int _{0}^{1}\log_2\frac{{\rm S}_{\matr H}(-\beta z)}{{\rm S}_{\matr H}(-z)}\;{\rm d}z. 
\end{align}
This completes the proof.

\section{Proof of Remark~\ref{defh}} 
Invoking Lemma~\ref{projection} we can write
\begin{eqnarray} 
 {\rm S}_{\matr H}(-t)={\rm S}_{\matr H^\dagger}(-t)=\lim_{z\to -1^{+}}{{\rm S}_{\matr H^\dagger\matr P_t^\dagger}(z)}.
\end{eqnarray}
Since $\matr H$ has almost surely full rank, $\alpha_{\matr H\matr P_{t}^\dagger}=1$, so that ${\rm \tilde F}_{\matr P_t\matr H}={\rm F}_{\matr H^\dagger\matr P_t^\dagger}$. Then from Lemma~\ref{Smap} we have
\begin{equation}
 \lim_{z\to -1^{+}}{\rm S}_{\matr H^\dagger \matr P_{t}^\dagger}(z)= \int \frac{1}{x}\; {\rm dF}_{\matr H^\dagger \matr P_{t}^\dagger}(x).\label{Sharmonic}
\end{equation}
This completes the proof. 

\section{Proof of Remark~\ref{devsup}}\label{devsupproof}
For the sake of notational simplicity we introduce
\begin{align}
\matr Y_{\beta}&\triangleq \matr {\bf I}+\gamma \matr P_{\beta}\matr H\matr H^\dagger \matr P_{\beta}^\dagger \\
&=\matr P_{\beta}(\matr {\bf I}+\gamma \matr H\matr H^\dagger)\matr P_{\beta}^\dagger \\
&=\matr P_{\beta}\matr Y_1\matr P_{\beta}^\dagger.
\end{align}
It follows that $\matr Y_1$ is unitarily invariant since $\matr H\matr H^\dagger$ is. Furthermore, since  $\matr H \matr H^\dagger$ has a compactly support LED so does $\matr Y_1$. Thus $\matr Y_1$ is asymptotically free of $\matr P_\beta^\dagger \matr P_\beta$ \cite{hiai}. Then, with Lemma~\ref{projection} we have in the limit $N\to \infty$
\begin{equation}
{\rm S}_{\sqrt{\matr Y_\beta}}(z)={\rm S}_{\sqrt{\matr Y_1}}(\beta z).\label{Like}
\end{equation}
Here we note that ${\rm S}_{\sqrt{\matr Y_\beta}}(z)$ is strictly decreasing on $(-1,0)$ if, and only if, ${\rm F}_{\sqrt{\matr Y_\beta}}$ is not a Dirac distribution function, see Lemma~\ref{Smap}. 

We recall the following property of $\eta_{\matr H}(\gamma)$ see \eqref{eta}\cite{tulino}:
\begin{equation}
\frac{{\rm d}\{\mathcal I(\gamma;{\rm F}_{\matr P_\beta\matr H})-\beta\mathcal I(\gamma;{\rm F}_{\matr H})\}}{{\rm d}\gamma}=\frac{1-\eta_{\matr P_\beta \matr H}- \beta (1-\eta_{\matr H})}{\gamma \ln 2},
\end{equation}
where for convenience $\eta_{\matr H}$ is short for $\eta_{\matr H}(\gamma)$. Hence, in order to prove the remark it is sufficient to show that 
\begin{equation}
(1-\beta)+\beta\eta_{\matr H}-\eta_{\matr P_\beta \matr H}\geq 0\label{ineq1}
\end{equation} 
where the equality holds when $\beta=1$. Furthermore, by using \cite[Lemma 2.26]{tulino} we have
\begin{equation}
\eta_{\matr P_\beta \matr H}=(1-\beta)+\beta\eta_{\matr H^\dagger \matr P_\beta^\dagger}.
\end{equation} 
Thus, the right-hand side of (\ref{ineq1}) is equal to $\beta(\eta_{\matr H}-\eta_{\matr H^\dagger\matr P_\beta^\dagger})$. Therefore we are left with proving $\eta_{\matr H}\geq \eta_{\matr H^\dagger\matr P_\beta^\dagger}$. Firstly, remark that 
\begin{equation}
\eta_{\matr H^\dagger\matr P_\beta^\dagger}=\int \frac{1}{x}\;{\rm dF}_{\sqrt{\matr Y_\beta}}(x).
\end{equation} 
Then, by using (\ref{Sharmonic}) and \eqref{Like} we obtain  
\begin{align}
\eta_{\matr H^\dagger\matr P_\beta^\dagger}&=\lim_{z\to -1^{+}}{\rm S}_{\sqrt{\matr Y_\beta}}(z)\\ 
&=\lim_{z\to -1^{+}}{\rm S}_{\sqrt{\matr Y_1}}(\beta z)\\
&={\rm S}_{\sqrt{\matr Y_1}}(-\beta ) \;, \quad 0<\beta<1
\end{align}
which is strictly increasing with $\beta$, see Lemma~\ref{Smap}. This completes the proof.
\section{Proof of Theorem~3}\label{The3}
The matrices $\matr X\matr X^\dagger$, $\matr Y\matr Y^\dagger$ and $ \matr P_\beta^\dagger\matr P_\beta$ are asymptotically free \cite{hiai}.  Then, from Lemma \ref{deviation} and the linearity property of the Lebesgue integral we have
\begin{align}
\Delta \mathcal L(\beta;{\rm F}_{\matr X\matr Y})&=-\beta\int\limits_{0}^{1}\log_2\frac{{\rm S}_{\matr X}(-\beta z){\rm S}_{\matr Y}(-\beta z)}{{\rm S}_{\matr X}(-z){\rm S}_{\matr Y}(-z)}\;{\rm d}z\\
&=\Delta \mathcal L(\beta;{\rm F}_{\matr X})+\Delta\mathcal L(\beta;{\rm F}_{\matr Y}).\label{final}
\end{align}

\section{Proof of Remark~\ref{CTadd}}\label{R7}
For an $N\times N$ matrix $\matr A$, we define 
\begin{equation}
\phi (\matr A)\triangleq \lim_{N\to \infty}\frac{1}{N}{\rm tr}(\matr A)
\end{equation}
whenever the limit exists. Since $\matr X\matr X^\dagger$ and $\matr Y\matr Y^\dagger$ are asymptotically free, we have (see \cite[Eq. (120)]{ralfc})
\begin{align}
\phi (\matr X^\dagger\matr Y^\dagger\matr Y\matr X)&=\phi (\matr X^\dagger\matr X)\phi (\matr Y^\dagger\matr Y)\label{iden1}\\
\phi ((\matr X^\dagger \matr Y^\dagger \matr Y \matr X)^2)&= \phi(\matr X^\dagger \matr X)^2 \phi ((\matr Y^\dagger \matr Y)^2)\nonumber \\ &+\phi(\matr Y^\dagger \matr Y)^2 \phi ((\matr X^\dagger \matr X)^2)\nonumber \\&-\phi(\matr X^\dagger \matr X)^2\phi(\matr Y^\dagger \matr Y)^2 \label{iden2}.
\end{align}
Furthermore, from \cite[Theorem~2.1]{Burthesis} for $\matr A\in \{\matr X, \matr Y, \matr X \matr Y\}$ we have almost surely
\begin{equation}
\lim_{N\to \infty} {(\matr A^\dagger \matr A)_{ii}}\to \phi (\matr A^\dagger \matr A),~ \forall i.  \label{result3}
\end{equation}
Inserting \eqref{result3} in the definition of the crosstalk ratio in \eqref{CTR}, we get
for $\matr A\in \{\matr X, \matr Y, \matr X \matr Y\}$ that
\begin{equation}
{\rm CT}_{\matr A}= \frac{\phi ((\matr A^\dagger \matr A)^2)}{2\phi(\matr A^\dagger \matr A)^2}-\frac{1}{2}.\label{ACTR}
\end{equation}
We complete the proof by plugging \eqref{iden1} and \eqref{iden2} in \eqref{ACTR} for $\matr A=\matr {XY}$:
\begin{align}
{\rm CT}_{\matr X\matr Y}&=\frac{\phi(\matr X^\dagger \matr X)^2 \phi ((\matr Y^\dagger \matr Y)^2)+\phi(\matr Y^\dagger \matr Y)^2 \phi ((\matr X^\dagger \matr X)^2)}{2\phi (\matr X^\dagger\matr X)^2\phi (\matr Y^\dagger\matr Y)^2}-1\\
&= {\rm CT}_{\matr X}+{\rm CT}_{\matr Y}.
\end{align}

\bibliographystyle{IEEEtran}
\bibliography{liter}
\vspace*{-2\baselineskip}

\begin{IEEEbiographynophoto}{Burak~\c{C}akmak} was born in Istanbul, Turkey, 1986. He received the B.Eng. degree from Uluda\u{g} University, Turkey in 2009, M.Sc. degree from Norwegian University of Science and Technology, Norway in 2012 and Ph.D degree from Aalborg University, Denmark in 2017.  Dr. \c{C}akmak  is a postdoctoral researcher at the Department of Computer Science, Technical University of Berlin, Germany. His research interests include random matrix theory, communication theory, statistical physics of disorder systems, machine learning and Bayesian inference.  
\end{IEEEbiographynophoto}
\vspace*{-2\baselineskip}
\begin{IEEEbiographynophoto}{Ralf~R.~M\"{u}ller} (S'96--M'03--SM'05)
was born in Schwabach, Germany, 1970. He received the Dipl.-Ing. and Dr.-Ing. degree with distinction from Friedrich-Alexander-Universit\"{a}t (FAU) Erlangen-N\"{u}rnberg in 1996 and 1999, respectively. From 2000 to 2004, he directed a research group at The Telecommunications Research Center Vienna in Austria and taught as an adjunct professor at TU Wien. In 2005, he was appointed full professor at the Department of Electronics and Telecommunications at the Norwegian University of Science and Technology in Trondheim, Norway. In 2013, he joined the Institute for Digital Communications at FAU Erlangen-N\"{u}rnberg in Erlangen, Germany. He held visiting appointments at Princeton University, US, Institute Eurécom, France, University of Melbourne, Australia, University
of Oulu, Finland, National University of Singapore, Babes¸-Bolyai University, Cluj-Napoca, Romania, Kyoto University, Japan, FAU Erlangen-N\"{u}rnberg, Germany, and TU M\"{u}nchen, Germany.
Prof. M\"{u}ller received the Leonard G. Abraham Prize (jointly with Sergio Verd\'{u}) for the paper ``Design and analysis of low-complexity interference mitigation on vector channels'' from the IEEE Communications Society. He was presented awards for his dissertation ``Power and bandwidth efficiency of multiuser systems with random spreading'' by the Vodafone Foundation for Mobile Communications and the German Information Technology Society (ITG). Moreover, he received the ITG award for the paper ``A random matrix model for communication via antenna arrays'' as well as the Philipp-Reis Award (jointly with Robert Fischer). Prof. M\"{u}ller served as an associate editor for the IEEE TRANSACTIONS ON INFORMATION THEORY from 2003 to 2006 and as an executive editor for the IEEE TRANSACTIONS ON WIRELESS COMMUNICATIONS from 2014 to 2016. 
\end{IEEEbiographynophoto}

\vspace*{-2\baselineskip}
\begin{IEEEbiographynophoto}{Bernard H. Fleury} (M'97--SM'99) received the Diplomas in Electrical Engineering and in Mathematics in 1978 and 1990 respectively and the Ph.D. Degree in Electrical Engineering in 1990 from the Swiss Federal Institute of Technology Zurich (ETHZ), Switzerland. Since 1997, he has been with the Department of Electronic Systems, Aalborg University, Denmark, as a Professor of Communication Theory. From 2000 till 2014 he was Head of Section, first of the Digital Signal Processing Section and later of the Navigation and Communications Section. From 2006 to 2009, he was partly affiliated as a Key Researcher with the Telecommunications Research Center Vienna (ftw.), Austria. During 1978–1985 and 1992–1996, he was a Teaching Assistant and a Senior Research Associate, respectively, with the Communication Technology Laboratory, ETHZ. Between 1988 and 1992, he was a Research Assistant with the Statistical Seminar at ETHZ. Prof. Fleury’s research interests cover numerous aspects within communication theory, signal processing, and machine learning, mainly for wireless communication systems and networks. His current scientific activities include stochastic modeling and estimation of the radio channel, especially for large systems (operating in large bandwidths, equipped with large antenna arrays, etc.) deployed in harsh conditions (e.g. in highly time-varying environments); iterative message-passing processing (with focus on the design of efficient feasible architectures for wireless receivers); localization techniques in wireless terrestrial systems; and radar signal processing. Prof. Fleury has authored and coauthored more than 150 publications and is co-inventor of 6 filed or published patents in these areas. He has developed, with his staff, a high-resolution method for the estimation of radio channel parameters that has found a wide application and has inspired similar estimation techniques both in academia and in industry. 
\end{IEEEbiographynophoto}
\end{document}